\newtheorem{thm}{\bf Theorem}
\newtheorem{lem}{\bf Lemma}
\theoremstyle{definition}
\DeclareFontFamily{OT1}{pzc}{}
\DeclareFontShape{OT1}{pzc}{m}{it}{<-> s * [1.200] pzcmi7t}{}
\DeclareMathAlphabet{\mathpzc}{OT1}{pzc}{m}{it}
\renewcommand\footnotemark{}
\newmdenv[
  hidealllines=true,
  backgroundcolor=blue!10,
  innerleftmargin=8pt,
  innerrightmargin=8pt,
  innertopmargin=0pt,
  innerbottommargin=6pt,
  leftmargin=-0pt,
  rightmargin=-0pt
]{shadedbox}
\definecolor{airforceblue}{rgb}{0.36, 0.54, 0.66}
\definecolor{ballblue}{rgb}{0.13, 0.67, 0.8}
\definecolor{alizarin}{rgb}{0.82, 0.1, 0.26}
\definecolor{asparagus}{rgb}{0.53, 0.66, 0.42}
\definecolor{applegreen}{rgb}{0.55, 0.71, 0.0}
\definecolor{armygreen}{rgb}{0.29, 0.33, 0.13}
\definecolor{amber(sae/ece)}{rgb}{1.0, 0.49, 0.0}
\definecolor{coquelicot}{rgb}{1.0, 0.22, 0.0}
\definecolor{ao(english)}{rgb}{0.0, 0.5, 0.0}
\definecolor{amber}{rgb}{1, 0.6, 0}
\let\NAT@parse\undefined
\newcommand{\sev}{\mathrm{sev}}
\newcommand{\mse}{\mathrm{mse}}
\newcommand{\cc}{\boldsymbol{\Sigma}_{\boldsymbol{X}|\boldsymbol{Y}}}
\newcommand{\cm}{\mathbb{E}\{ \boldsymbol{X}|\boldsymbol{Y}\}}
\newcommand{\cmtwo}{\mathbb{E}\{ \boldsymbol{||X||}_2^2|\boldsymbol{Y}\}}
\newcommand{\cmthree}{\mathbb{E}\{ \boldsymbol{||X||}_2^2\boldsymbol{X}|\boldsymbol{Y}\}}
\newcommand{\xemi}{\hat{\boldsymbol{X}}_{\mu}^{*}}
\newcommand{\xemit}{\hat{\boldsymbol{X}}_{{\mu}^{ \prime}}^{*}}
\newcommand{\xeo}{\hat{\boldsymbol{X}}_{0}^{*}}
\newcommand{\xeinf}{\hat{\boldsymbol{X}}_\infty^{*}}
\newcommand{\dx}{\widehat{\boldsymbol{\Delta}\boldsymbol{X}}}
\newcommand{\sminy}{\sigma_{\min}{(\boldsymbol{Y})}}
\newcommand{\smaxy}{\sigma_{\max}{(\boldsymbol{Y})}}
\newcommand{\siy}{\sigma_{i}{(\boldsymbol{Y})}}
\title{
Uncertainty Principles in Risk-Aware Statistical Estimation}
\author{
Nikolas P. Koumpis and Dionysios S. Kalogerias
\thanks{The Authors are with the Department of EE, Yale University,
New Haven, CT. email: \{dionysis.kalogerias, nikolaos.koumpis\}@yale.edu.}}
\begin{document}
\date{}

\maketitle
\thispagestyle{empty}

\begin{abstract}
We present a new uncertainty principle for risk-aware statistical estimation, effectively quantifying the inherent trade-off between mean squared error ($\mse$) and risk, the latter measured by the associated average predictive squared error variance ($\sev$), for every admissible estimator of choice. Our uncertainty principle has a familiar form and resembles fundamental and classical results arising in several other areas, such as the Heisenberg principle in statistical and quantum mechanics, and the Gabor limit (time-scale trade-offs) in harmonic analysis. In particular, we prove that, provided a joint generative model of states and observables, the product between $\mse$ and $\sev$ is bounded from below by a computable model-dependent constant, which is explicitly related to the Pareto frontier of a recently studied $\sev$-constrained minimum $\mse$ (MMSE) estimation problem. Further, we show that the aforementioned constant is inherently connected to 
 an intuitive new and rigorously topologically grounded statistical measure of distribution skewness in multiple dimensions, consistent with Pearson's moment coefficient of skewness for variables on the line. Our results are also illustrated via numerical simulations.
 
\end{abstract}

\section{Introduction}\label{intro}
Designing decision rules aiming for least expected losses is a standard and commonly employed objective in statistical learning, estimation, and control. Still, achieving optimal performance on average is insufficient without safeguarding against less probable though statistically significant, i.e., \textit{risky}, events, and this is especially pronounced in modern, critical applications. Examples appear naturally in many areas, including robotics \cite{kim2019bi}, \cite{wang2020fast}, wireless communications and networking \cite{ma2018risk},\cite{bennis2018ultrareliable}, edge computing \cite{Li2020}, health \cite{Cardoso2019}, and finance \cite{dentcheva2017statistical}, to name a few.
Indeed, 
risk-neutral decision policies smoothen unexpected events by construction, thus exhibiting potentially large
statistical performance volatility, since the latter remains uncontrolled. 
In such situations, risk-aware decision rules are highly desirable as they systematically guarantee robustness, in the form of various operational specifications, such as safety \cite{Chapman2019,Samuelson2018}, fairness \cite{williamson2019fairness, tao2020incorporating}, distributional
robustness \cite{Gurbuzbalaban2020, curi2019adaptive}, and prediction error stability \cite{kalogerias2020noisy}.
\par
In the realm of Bayesian mean squared error ($\mathrm{mse}$) statistical estimation, 
a risk-constrained reformulation of the standard  minimum $\mathrm{mse}$ (MMSE) estimation problem was recently proposed in \cite{kalogerias2020better}, where, given a generative model (i.e., distribution) of states and observables, risk is measured by the average predictive squared error variance ($\mathrm{sev}$) associated with every feasible square integrable (i.e., admissible) estimator. Quite remarkably, such a constrained functional estimation problem admits a unique closed-form solution;
as compared with classical risk-neutral MMSE estimation (i.e., conditional mean), the optimal risk-aware estimator nonlinearly interpolates between the risk-neutral MMSE estimator (i.e., conditional mean) and a new, maximally risk-aware statistical estimator, minimizing average errors while constraining risk under a designer-specified threshold.

From the analysis presented in \cite{kalogerias2020better}, it becomes evident that low-risk estimators deteriorate performance on average and vice-versa. However, although $\mathrm{mse}$ and $\mathrm{sev}$ (i.e., risk) are shown to trade between each other within the class of optimal risk-aware estimators proposed in  \cite{kalogerias2020better}, a mathematical statement that expresses this fundamental interplay for \textit{general} estimators is non-trivial and currently unknown.
This paper is precisely on the discovery, quantification and analysis of this interplay. Our contributions are as follows.

\textbf{--A New $\mathrm{mse}$/$\mathrm{sev}$ Uncertainty Principle (Section \ref{section3}).}
We quantify the trade-off between $\mathrm{mse}$ and $\mathrm{sev}$ associated with any square integrable estimator of choice by bounding their product by a model-dependent, estimator-independent characteristic constant. This fundamental lower bound, which we call the \textit{optimal trade-off}, is always attained within the class of optimal risk-aware estimators of \cite{kalogerias2020better}, and provides a universal benchmark of the trade-off efficiency of every possible admissible estimator. Our uncertainty relation comes in the natural form of an \textit{uncertainty principle};
similar relations are met in different contexts, e.g., in statistical  mechanics (Heisenberg principle) \cite{cohen2008quantum} and harmonic analysis (Gabor limit) \cite{vakmann2012sophisticated}. 
In essence, uncertainty principles are bounds on the concentration or spread of a quantity in two different domains. In our case, $\mathrm{sev}$ measures the squared error statistical spread, while $\mathrm{mse}$ measures the squared error average (expected) value. 
Our uncertainty principle states that, \textit{in general}, both quantities cannot be simultaneously small, let alone minimized; in the latter case exceptions exist, herein called the class of \textit{skew-symmetric} models. In fact, conditionally Gaussian models are a canonical example in this exceptional class. 



\textbf{--Hedgeable Risk Margins and Lower Bound Characterization (Sections \ref{section4}-\ref{section5}).}
We present an intuitive geometric interpretation of the class of risk-aware estimators of \cite{kalogerias2020better}, inherently related to the optimal trade-off involved in our uncertainty principle. We define a new quantity, called the \textit{expected hedgeable risk margin} associated with the underlying generative model by projecting the stochastic parameterized curve induced by the class of risk-aware estimators of \cite{kalogerias2020better} onto the line that links the risk-neutral (i.e., MMSE) with the maximally risk-averse estimator.
Intuitively, such a projection expresses the margin to potentially counteract against risk (as measured by the $\mathrm{sev}$), on average relative to the distribution of the observables.
Subsequently, we show that, under mild assumptions, the optimal trade-off is \textit{order- equivalent} to the corresponding expected risk margin. We do this by proving explicit and order-matching upper and lower bounds on the optimal trade-off that depend strictly proportionally to the expected risk margin. The importance of this result is that a large (small) risk margin implies a large (small) optimal trade-off, and vice versa.

\textbf{--Topological/Statistical Interpretation of Risk Margins, and Skewness in High Dimensions (Section \ref{section6}).} The significance of the risk-margin functional is established by showing that it admits a dual topological and statistical interpretation, within a rigorous technical framework.
First, we prove that the space of all generative models with a finite risk-margin becomes a topological space endowed with a (pseudo)metric, the latter induced by a certain risk-margin-related functional. This functional vanishes for all skew-symmetric models, and therefore is rigorously interpretable as a distance to all members of this exceptional class (via the (pseudo)metric).
Simultaneously, the  aforementioned risk-margin-related functional corresponds to an intuitive model statistic which can be regarded as a generalized measure of distribution skewness, 
consistent with the familiar Pearson's moment coefficient of skewness, to which it reduces exactly for totally unobservable variables on the line. 
Similarly, the induced (pseudo)metric may be
 regarded as a measure of the \textit{relative skewness}
between (filtered) distributions.









Lastly, our results are supported by indicative numerical examples, along with a relevant discussion (Section \ref{section7}).

\section{SEV-Constrained MMSE Estimation}\label{section2}

On a probability space $(\Omega,\mathscr{F},{\mathpzc P})$, consider random elements
$\boldsymbol{X}:\Omega\rightarrow\mathbb{R}^{n}$ and $\boldsymbol{Y}:\Omega\rightarrow\mathbb{R}^{m}$ following a joint Borel probability measure ${\cal P}_{(\boldsymbol{X},\boldsymbol{Y})}\equiv\mathcal{P}$. Intuitively, $\boldsymbol{X}$ may be thought of as a \textit{hidden} random state of nature, and $\boldsymbol{Y}$ as the corresponding \textit{observables}. Also, hereafter, let ${\cal L}_{2|\mathscr{Y}}$ be the space of square-integrable
$\mathscr{Y}\triangleq\sigma\{\boldsymbol{Y}\}\text{-measurable}$ estimators (i.e., deterministic functions of the observables). 
Provided a generative model ${\cal P}_{(\boldsymbol{X},\boldsymbol{Y})}$, we consider the \textit{mean squared error} and \textit{squared
error variance} functionals $\mathrm{mse}:{\cal L}_{2|\mathscr{Y}}\rightarrow\mathbb{R}_{+}$
and $\mathrm{sev}:{\cal L}_{2|\mathscr{Y}}\rightarrow\mathbb{R}_{+}$
defined respectively as
\begin{flalign}\label{eq12}
\mathrm{mse}(\hat{\boldsymbol{X}}) & \hspace{-1bp}\triangleq\hspace{-1bp}\mathbb{E}\{\Vert\boldsymbol{X}-\hat{\boldsymbol{X}}\Vert_{2}^{2}\},\quad\text{and}\\
\mathrm{sev}(\hat{\boldsymbol{X}}) & \hspace{-1bp}\triangleq\hspace{-1bp}\mathbb{E}\{\mathbb{V}_{\boldsymbol{Y}}\{\Vert\boldsymbol{X}-\hat{\boldsymbol{X}}\Vert_{2}^{2}\}\hspace{-1pt}\} \\
 & \hspace{-1bp}\equiv\hspace{-1bp}\mathbb{E}\big\{\hspace{0pt}\hspace{0pt}\mathbb{E}\big\{\hspace{-1bp}\big(\hspace{-0.5pt}\Vert\boldsymbol{X}-\hat{\boldsymbol{X}}\Vert_{2}^{2}\hspace{-1bp}-\hspace{-1bp}\mathbb{E}\{\Vert\boldsymbol{X}-\hat{\boldsymbol{X}}\Vert_{2}^{2}|\boldsymbol{Y}\}\big)^{2}\big| \boldsymbol{Y}\big\}\hspace{-1.5bp}\big\},\nonumber
\end{flalign}
where $\hat{\boldsymbol{X}}\in{\cal L}_{2|\mathscr{Y}}$. Note that both functionals $\mathrm{mse}$ and $\mathrm{sev}$ are \textit{law invariant}, i.e., they depend exclusively on ${\cal P}_{(\boldsymbol{X},\boldsymbol{Y})}$ \cite{ShapiroLectures_2ND}. As such, they
may be equivalently thought of as mappings whose domain is the space of Borel probability measures on the product space $\mathbb{R}^n \times \mathbb{R}^m$.

While $\mathrm{mse}$ quantifies the squared error incurred by a given estimator $\hat{\boldsymbol{X}}$
on average and is a \textit{gold-standard} performance criterion in estimation and control \cite{Speyer2008STOCHASTIC}, $\mathrm{sev}$ quantifies the \textit{risk} of
$\hat{\boldsymbol{X}}$, as measured by the average predictive variance
of the associated instantaneous estimation error around its MMSE-sense
prediction given the observable $\boldsymbol{Y}$. In other words, $\mathrm{sev}$ quantifies the \textit{statistical variability} of $||\boldsymbol{X}-\hat{\boldsymbol{X}}||^2_2$ against the \textit{predictable statistical benchmark} $\mathbb{E}\{ ||\boldsymbol{X}-\hat{\boldsymbol{X}}||^2_2 | \boldsymbol{Y}\}$. Such statistical variability is left uncontrolled in standard MMSE estimation; in fact, this is a natural flaw of MMSE estimators (i.e., conditional means) by construction, resulting in statistically unstable prediction errors, especially in problems involving skewed and/or heavy-tailed distributions \cite{kalogerias2020better}.
%

To counteract risk-neutrality of MMSE estimators, a \textit{constrained} reformulation of the MMSE problem was recently introduced in \cite{kalogerias2020better}, where the $\mse$ is minimized subject to an explicit constraint on the associated $\sev$. The resulting risk-aware stochastic variational (i.e., functional) problem is
\begin{equation}
\begin{array}{rl}
\underset{\hat{\boldsymbol{X}}\in{\cal L}_{2|\mathscr{Y}}}{\mathrm{minimize}} & \mathrm{mse}(\hat{\boldsymbol{X}})\\
\mathrm{subject\,to} & \mathrm{sev}(\hat{\boldsymbol{X}})\le\varepsilon
\end{array},\label{eq:Base_Problem-V-3}
\end{equation}
where $\varepsilon>0$ is a user-prescribed tolerance. As problem \eqref{eq:Base_Problem-V-3} may be shown to be convex \cite{kalogerias2020better,Kalogerias2019b}, prominent role in the analysis of \eqref{eq:Base_Problem-V-3} plays its variational Lagrangian relaxation
\begin{equation}
\inf_{\hat{\boldsymbol{X}}\in{\cal L}_{2|\mathscr{Y}}}\mathrm{mse}(\hat{\boldsymbol{X}})+\mu\,\mathrm{sev}(\hat{\boldsymbol{X}}),\label{eq:Lagrangian}
\end{equation}
for fixed $\mu\ge0$, dependent of the particular $\varepsilon$ of choice.
By defining the third-order posterior statistic
\begin{equation}
    \boldsymbol{R}(\boldsymbol{Y})\triangleq\cmthree-\cmtwo\cm,\label{adonis}
\end{equation}
and under the mild condition that $\mathbb{E}\{\Vert \boldsymbol{X} \Vert_2^3|\boldsymbol{Y}\} \in {\cal L}_{2|\mathscr{Y}}$ (also assumed hereafter), an essentially unique optimal solution to (\ref{eq:Lagrangian}) may be expressed in closed
form as
\begin{align}
\hat{\boldsymbol{X}}_{\mu}^{*}(\boldsymbol{Y})=
\dfrac{\mathbb{E}\big\{\hspace{-1pt}\boldsymbol{X}|\boldsymbol{Y}\big\}+\mu\boldsymbol{R}(\boldsymbol{Y})}{\boldsymbol{I}+2\mu\boldsymbol{\Sigma}_{\boldsymbol{X}|\boldsymbol{Y}}},\label{eq:RA_MMSE}
\end{align}
for all $\mu\ge0$, where $\boldsymbol{\Sigma}_{\boldsymbol{X}|\boldsymbol{Y}}\succeq0$ denotes the conditional convariance of $\boldsymbol{X}$ relative to $\boldsymbol{Y}$.
When $\mu\equiv\infty$, we also \textit{define }the \textit{maximally risk-averse estimator} (corresponding to the tightest choice of $\varepsilon$)
\begin{equation}
\hspace{-3bp}\hat{\boldsymbol{X}}_{\infty}^{*}(\boldsymbol{Y})\triangleq\dfrac{1}{2}\boldsymbol{\Sigma}_{\boldsymbol{X}|\boldsymbol{Y}}^{\dagger}\boldsymbol{R}(\boldsymbol{Y})+
\boldsymbol{U}\begin{bmatrix}{\bf 0}_{r}\\
\big[\boldsymbol{U}^{\top}\mathbb{E}\big\{\hspace{-1pt}\boldsymbol{X}|\boldsymbol{Y}\big\}\big]_{n}^{r+1}\label{eq6}
\end{bmatrix},
\end{equation}
where $\boldsymbol{\Sigma}_{\boldsymbol{X}|\boldsymbol{Y}}^{\dagger}\succeq0$
denotes the Moore--Penrose pseudoinverse of $\boldsymbol{\Sigma}_{\boldsymbol{X}|\boldsymbol{Y}}$,
the latter with spectral decomposition $\boldsymbol{\Sigma}_{\boldsymbol{X}|\boldsymbol{Y}}\equiv\boldsymbol{U}\boldsymbol{\Lambda}\boldsymbol{U}^{\top}$
and of rank $r$.
It is then standard procedure to show that
$
\lim_{\mu\rightarrow\infty}\hat{\boldsymbol{X}}_{\mu}^{*}=\hat{\boldsymbol{X}}_{\infty}^{*}
$, implying that the paratererization ${\boldsymbol{X}}_{(\cdot)}^{*}$ is continuous on $[0,\infty]$.
Lastly, as also proved in \cite{kalogerias2020better}, whenever  $\mathcal{P}_{\boldsymbol{X}|\boldsymbol{Y}}$ satisfies the condition
\begin{equation}
\mathbb{E}\{\left(X_{i}-\mathbb{E}\left\{X_{i} \mid \boldsymbol{Y}\right\}\right)^{2}(\boldsymbol{X}-\mathbb{E}\{\boldsymbol{X} \mid \boldsymbol{Y}\}) \mid \boldsymbol{Y}\} \equiv \mathbf{0} \label{sufcond}
\end{equation}
for all $i \in \mathbb{N}_{n}^{+}$, it follows that
\begin{align}
    \xemi=\mathbb{E}\{ \boldsymbol{X}|\boldsymbol{Y}\},\quad\forall\mu \in [0,\infty].\label{SkS}
\end{align}
In particular, this is the case when $\mathcal{P}_{\boldsymbol{X}|\boldsymbol{Y}}$ is jointly Gaussian.
Hereafter, every generative model ${\cal P}_{(\boldsymbol{X},\boldsymbol{Y})}$ satisfying \eqref{SkS} for almost all $\boldsymbol{Y}$ will be called \textit{skew-symmetric}; this terminology is justified later in Section \ref{section5}.

\section{Uncertainty Principles}\label{section3}
Already from \eqref{eq:RA_MMSE} we can see that there is an inherent trade-off between $\mse$ and $\sev$ for the family of optimal estimators $\{\xemi\}_{\mu}$. Of course, the resulting $\mse$ and $\sev$ define the Pareto frontier of problem \eqref{eq:Lagrangian}. In this section, we quantify the $\mse$/$\sev$ trade-off for all admissible estimators. We do that by deriving a non-trivial lower bound on the product between $\mathrm{mse}$ and $\mathrm{sev}$. 
\par
We start by stating two technical lemmata, useful in our development. To this end, let $\smaxy$ denote the maximum eigenvalue of $\boldsymbol{\Sigma}_{\boldsymbol{X}|\boldsymbol{Y}}$, and define $\widehat{\boldsymbol{\Delta X}}\triangleq \xeo-\xeinf$.
\begin{lem}[\textbf{Monotonicity}]\label{lemma2}
The functions $\mathrm{mse}(\hat{\boldsymbol{X}}_{(\cdot)}^{*})$
and $\mathrm{sev}(\hat{\boldsymbol{X}}_{(\cdot)}^{*})$
are increasing and decreasing on $[0,\infty]$, respectively.
\end{lem}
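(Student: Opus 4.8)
The plan is to use only the defining property of the curve $\{\xemi\}_{\mu\ge0}$, namely that for each fixed $\mu\ge0$ the estimator $\xemi$ is an (essentially unique) minimizer over ${\cal L}_{2|\mathscr{Y}}$ of the Lagrangian functional $J_\mu(\hat{\boldsymbol{X}})\triangleq\mse(\hat{\boldsymbol{X}})+\mu\,\sev(\hat{\boldsymbol{X}})$ of \eqref{eq:Lagrangian}. Write $f(\mu)\triangleq\mse(\xemi)$ and $g(\mu)\triangleq\sev(\xemi)$, both finite under the standing moment assumption. Fix $0\le\mu_1<\mu_2<\infty$. Optimality of $\hat{\boldsymbol{X}}_{\mu_1}^{*}$ for $J_{\mu_1}$ and of $\hat{\boldsymbol{X}}_{\mu_2}^{*}$ for $J_{\mu_2}$ gives the pair of inequalities $f(\mu_1)+\mu_1 g(\mu_1)\le f(\mu_2)+\mu_1 g(\mu_2)$ and $f(\mu_2)+\mu_2 g(\mu_2)\le f(\mu_1)+\mu_2 g(\mu_1)$. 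Since $\mse$ and $\sev$ are law invariant, their values do not depend on the chosen representative, so the ``essential'' qualifier is harmless.

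Adding the two inequalities and cancelling the common terms yields $(\mu_1-\mu_2)\big(g(\mu_1)-g(\mu_2)\big)\le0$, and since $\mu_1-\mu_2<0$ this forces $g(\mu_1)\ge g(\mu_2)$; hence $\sev(\hat{\boldsymbol{X}}_{(\cdot)}^{*})$ is non-increasing on $[0,\infty)$. Substituting $g(\mu_2)-g(\mu_1)\le0$ into the first inequality gives $f(\mu_1)-f(\mu_2)\le\mu_1\big(g(\mu_2)-g(\mu_1)\big)\le0$ (using $\mu_1\ge0$), so $f(\mu_1)\le f(\mu_2)$ and $\mse(\hat{\boldsymbol{X}}_{(\cdot)}^{*})$ is non-decreasing on $[0,\infty)$. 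For the special endpoint $\mu_1=0$ the monotonicity of $f$ is in fact immediate, since $\xeo=\cm$ minimizes $\mse$ outright.

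It remains to incorporate the value $\mu=\infty$. Here I would invoke the continuity of the parameterization ${\boldsymbol{X}}_{(\cdot)}^{*}$ on $[0,\infty]$, i.e. $\xemi\to\xeinf$ as $\mu\to\infty$ (already recorded in Section~\ref{section2}), together with the continuity of the functionals $\mse$ and $\sev$ along this path — established via a dominated-convergence argument, using that the conditional moments entering \eqref{eq12} and \eqref{adonis} furnish integrable bounds uniform in $\mu$ along \eqref{eq:RA_MMSE} — to conclude $f(\mu)\to f(\infty)\triangleq\mse(\xeinf)$ and $g(\mu)\to g(\infty)\triangleq\sev(\xeinf)$. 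A non-decreasing function on $[0,\infty)$ that converges to its limiting value stays non-decreasing on $[0,\infty]$, and symmetrically for $g$; this upgrades both monotonicities to the closed interval. (Alternatively, one may identify $\xeinf$ directly as the $\sev$-minimizer with $\mse$ tie-breaking and run the same two-point inequalities, treating $\mu_2=\infty$ as a formal limit.)

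I expect the only delicate step to be this last one: verifying genuine continuity — not merely lower semicontinuity — of $\mse$ and $\sev$ restricted to the curve at $\mu=\infty$, which rests on controlling the third- and fourth-order conditional-moment terms in these functionals along \eqref{eq:RA_MMSE} as $\mu\to\infty$. Everything preceding it is the standard two-point (monotone-selection) argument for parametric convex families and uses nothing beyond optimality of each $\xemi$.
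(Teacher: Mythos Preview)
Your proof is correct and follows essentially the same route as the paper: the paper's argument is precisely the two-point optimality comparison you carry out, adding the inequalities $J_{\mu}(\xemi)\le J_{\mu}(\xemit)$ and $J_{\mu'}(\xemit)\le J_{\mu'}(\xemi)$ to obtain monotonicity of $\sev$, then substituting back for $\mse$. Your treatment of the endpoint $\mu=\infty$ via continuity is actually more explicit than the paper's Lemma~1 proof, which handles only finite $\mu,\mu'$ and tacitly defers the limit to Lemma~2.
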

\begin{lem} [\textbf{Continuity}]
The same functions $\mathrm{mse}(\hat{\boldsymbol{X}}_{(\cdot)}^{*})$
and $\mathrm{sev}(\hat{\boldsymbol{X}}_{(\cdot)}^{*})$ are continuous
on $[0,\infty]$ and Lipschitz continuous on $[0,\infty)$ with respective constants
\begin{equation}
\begin{aligned}
\mathcal{K}_{\mathrm{mse}}&=4\mathbb{E}\big\{\smaxy\big\|\widehat{\boldsymbol{\Delta X}} \big\|^2_2\big\} \quad\text{and}
\\
\mathcal{K}_{\mathrm{sev}}&=4\mathbb{E}\big\{\smaxy^2\big\|\widehat{\boldsymbol{\Delta X}}
\big\|^2_2\big\}.
\end{aligned}
\end{equation}
\end{lem}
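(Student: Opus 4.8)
\section*{Proof proposal}

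The plan is to collapse both $\mse(\hat{\boldsymbol{X}}_{(\cdot)}^{*})$ and $\sev(\hat{\boldsymbol{X}}_{(\cdot)}^{*})$ into explicit scalar quadratic forms in the single fixed direction $\widehat{\boldsymbol{\Delta X}}$, and then read off the Lipschitz constants from elementary spectral estimates. First I would rewrite the family \eqref{eq:RA_MMSE} conveniently. Expanding \eqref{eq6} one checks, pointwise in $\boldsymbol{Y}$, that $\boldsymbol{R}(\boldsymbol{Y})\in\mathrm{range}(\cov)$ (it is a conditional covariance vector, hence annihilates $\ker\cov$) and that $\boldsymbol{R}(\boldsymbol{Y})=2\cov\,\xeinf$; consequently $\widehat{\boldsymbol{\Delta X}}=\cm-\xeinf\in\mathrm{range}(\cov)$, and plugging back into \eqref{eq:RA_MMSE} yields, for every $\mu\in[0,\infty)$,
\[
\xemi=\xeinf+(\boldsymbol{I}+2\mu\cov)^{-1}\widehat{\boldsymbol{\Delta X}},\qquad \cm-\xemi=2\mu\cov(\boldsymbol{I}+2\mu\cov)^{-1}\widehat{\boldsymbol{\Delta X}}.
\]
Because $\cov\succeq\boldsymbol{0}$, the matrix functions here are symmetric contractions commuting with $\cov$, and $\|(\boldsymbol{I}+2\mu\cov)^{-1}-(\boldsymbol{I}+2\mu'\cov)^{-1}\|\le 2\smaxy\,|\mu-\mu'|$ pointwise (the scalar map $\lambda\mapsto(1+2\mu\lambda)^{-1}$ is $2\lambda$-Lipschitz in $\mu$). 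These are the only structural facts used.

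For $\mse$, I would use Pythagoras in ${\cal L}_{2|\mathscr{Y}}$, $\mse(\hat{\boldsymbol{X}})=\mse(\cm)+\mathbb{E}\{\|\cm-\hat{\boldsymbol{X}}\|_{2}^{2}\}$, so that $\mse(\xemi)-\mse(\xemit)=\mathbb{E}\{\langle\xemit-\xemi,\,2\cm-\xemi-\xemit\rangle\}$. The second slot is $\le 2\|\widehat{\boldsymbol{\Delta X}}\|_{2}$ pointwise (each summand is a contraction applied to $\widehat{\boldsymbol{\Delta X}}$) and the first slot is $\le 2\smaxy\,|\mu-\mu'|\,\|\widehat{\boldsymbol{\Delta X}}\|_{2}$; Cauchy--Schwarz then gives $|\mse(\xemi)-\mse(\xemit)|\le 4|\mu-\mu'|\,\mathbb{E}\{\smaxy\|\widehat{\boldsymbol{\Delta X}}\|_{2}^{2}\}=\mathcal{K}_{\mse}|\mu-\mu'|$.

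For $\sev$, I would expand $\|\boldsymbol{X}-\hat{\boldsymbol{X}}\|_{2}^{2}=\|\boldsymbol{X}-\cm\|_{2}^{2}+2\langle\boldsymbol{X}-\cm,\cm-\hat{\boldsymbol{X}}\rangle+\|\cm-\hat{\boldsymbol{X}}\|_{2}^{2}$, drop the last ($\boldsymbol{Y}$-measurable) summand from the conditional variance, and compute $\mathbb{V}_{\boldsymbol{Y}}$ of the rest using $\mathbb{E}\{\boldsymbol{X}-\cm\mid\boldsymbol{Y}\}=\boldsymbol{0}$. The conditional covariance cross term produces exactly the third-order posterior statistic $\mathbb{E}\{\|\boldsymbol{X}-\cm\|_{2}^{2}(\boldsymbol{X}-\cm)\mid\boldsymbol{Y}\}$, which by \eqref{adonis} together with $\boldsymbol{R}(\boldsymbol{Y})=2\cov\,\xeinf$ simplifies to $-2\cov\,\widehat{\boldsymbol{\Delta X}}$. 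Substituting $\cm-\xemi$ from the representation and completing the square, the linear term is absorbed and one is left with a purely quadratic form $\sev(\xemi)=\sev(\xeinf)+c\,\mathbb{E}\{\widehat{\boldsymbol{\Delta X}}^{\top}(\boldsymbol{I}+2\mu\cov)^{-1}\cov(\boldsymbol{I}+2\mu\cov)^{-1}\widehat{\boldsymbol{\Delta X}}\}$ for a positive constant $c$. Differencing in $\mu$, factoring $\boldsymbol{a}^{\top}\boldsymbol{M}\boldsymbol{a}-\boldsymbol{b}^{\top}\boldsymbol{M}\boldsymbol{b}=(\boldsymbol{a}-\boldsymbol{b})^{\top}\boldsymbol{M}(\boldsymbol{a}+\boldsymbol{b})$, and bounding the resulting scalar multiplier factor by factor (one factor a contraction, one $2\smaxy|\mu-\mu'|$-controlled, and one copy of $\cov$ contributing $\smaxy$) gives $|\sev(\xemi)-\sev(\xemit)|\le\mathcal{K}_{\sev}|\mu-\mu'|$ with $\mathcal{K}_{\sev}=4\mathbb{E}\{\smaxy^{2}\|\widehat{\boldsymbol{\Delta X}}\|_{2}^{2}\}$.

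Continuity on all of $[0,\infty]$ then reduces to the endpoint $\mu=\infty$: the Lipschitz bounds give continuity on $[0,\infty)$, and from the quadratic representations, with $\boldsymbol{G}_{\mu}:=2\mu\cov(\boldsymbol{I}+2\mu\cov)^{-1}$, the $\mse$ and $\sev$ integrands converge pointwise as $\mu\to\infty$ --- since $\widehat{\boldsymbol{\Delta X}}\in\mathrm{range}(\cov)$ one has $\boldsymbol{G}_{\mu}\widehat{\boldsymbol{\Delta X}}\to\widehat{\boldsymbol{\Delta X}}$ --- and are dominated by $\|\widehat{\boldsymbol{\Delta X}}\|_{2}^{2}$ and $\smaxy\|\widehat{\boldsymbol{\Delta X}}\|_{2}^{2}$ respectively, both integrable (controlled by $\mathcal{K}_{\mse}$, $\mathcal{K}_{\sev}$); dominated convergence then identifies the limits with $\mse(\xeinf)$ and $\sev(\xeinf)$, consistent with $\xemi\to\xeinf$ already recorded above. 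The step I expect to be the main obstacle is the $\sev$ estimate: unlike $\mse$, $\sev$ is not a squared distance but carries a genuinely third-order skewness term, so the crux is the identity $\mathbb{E}\{\|\boldsymbol{X}-\cm\|_{2}^{2}(\boldsymbol{X}-\cm)\mid\boldsymbol{Y}\}=-2\cov\,\widehat{\boldsymbol{\Delta X}}$ and the completion of the square that re-centers $\sev(\xemi)$ at $\xeinf$; once that quadratic form is exposed, the $\smaxy^{2}$ Lipschitz bound is routine spectral bookkeeping.
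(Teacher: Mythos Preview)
Your proposal is correct and follows essentially the same strategy as the paper: write $\xemi$ affinely in the fixed direction $\dx$, reduce both $\mse(\xemi)$ and $\sev(\xemi)$ to explicit quadratic forms in $\dx$, and bound their $\mu$-differences by elementary spectral estimates on the commuting operators $\cov$ and $(\boldsymbol{I}+2\mu\cov)^{-1}$. The only presentational differences are that the paper works in eigencoordinates (diagonalizing $\cov$ and tracking diagonal entries of the matrix $\boldsymbol{H}(\mu,\mu')$) and imports the $\sev$ quadratic form from the QCQP reformulation in \cite{kalogerias2020better}, whereas you argue coordinate-free via operator norms and derive the $\sev$ quadratic form from scratch through the third-order identity $\mathbb{E}\{\|\boldsymbol{X}-\cm\|_{2}^{2}(\boldsymbol{X}-\cm)\mid\boldsymbol{Y}\}=-2\cov\,\dx$; for continuity at $\mu=\infty$ the paper gives explicit $O(\mu^{-2})$ rates while you appeal to dominated convergence.
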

Utilizing the lemmata above, we may now introduce the main result of the paper, which provides a new and useful characterization of the region of allowable $\mathrm{mse}$-$\mathrm{sev}$ combinations \textit{ever} possibly achievable by \textit{any} square-integrable estimator, given a generative model. Essentially, our result, which follows, quantifies that inherent trade-off between average estimation performance and risk.
\begin{thm}[\textbf{Uncertainty Principles}]\label{UP}
 Every admissible estimator $\hat{\boldsymbol{X}}\equiv\hat{\boldsymbol{X}}(\boldsymbol{Y})\in{\cal L}_{2|\mathscr{Y}}$
satisfies the lower bounds
\begin{equation}
{\mathrm{mse}(\hat{\boldsymbol{X}})\mathrm{sev}(\hat{\boldsymbol{X}})\ge\mathfrak{h}\ge\mathrm{mse}(\hat{\boldsymbol{X}}_{0}^{*})\mathrm{sev}(\hat{\boldsymbol{X}}_{\infty}^{*}),}
\end{equation}
where the characteristic number $\mathfrak{h}$ is given by
\begin{equation}
{\mathfrak{h}({\cal P})\equiv\mathrm{mse}(\hat{\boldsymbol{X}}_{\mu^{\star}}^{*})\mathrm{sev}(\hat{\boldsymbol{X}}_{\mu^{\star}}^{*}),}
\end{equation}
for any $\mu^{\star}\in\mathrm{arg}\mathrm{min}_{\mu\in[0,\infty]}\big\{\mathrm{mse}(\hat{\boldsymbol{X}}_{\mu}^{*})\mathrm{sev}(\hat{\boldsymbol{X}}_{\mu}^{*})\hspace{-1pt}\big\}\neq\emptyset$.
\end{thm}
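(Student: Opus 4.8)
The plan is to recognize that the left inequality is a statement about the geometry of the achievable region $\mathcal{A}\triangleq\{(\mathrm{mse}(\hat{\boldsymbol{X}}),\mathrm{sev}(\hat{\boldsymbol{X}})):\hat{\boldsymbol{X}}\in\mathcal{L}_{2|\mathscr{Y}}\}\subset\mathbb{R}_+^2$: I would show that the parameterized curve $\mu\mapsto(\mathrm{mse}(\hat{\boldsymbol{X}}_\mu^*),\mathrm{sev}(\hat{\boldsymbol{X}}_\mu^*))$, $\mu\in[0,\infty]$, is the Pareto frontier of the bi-objective problem of jointly minimizing $\mathrm{mse}$ and $\mathrm{sev}$, and, more precisely, that every point of $\mathcal{A}$ is weakly dominated (in both coordinates) by some point on this curve. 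Granting this, $\mathrm{mse}(\hat{\boldsymbol{X}})\,\mathrm{sev}(\hat{\boldsymbol{X}})$ is at least the product of the two coordinates of the dominating frontier point, which is at least the minimum of that product over the compact frontier, i.e.\ $\mathfrak{h}$; the minimum is attained precisely because the Continuity lemma makes $\mu\mapsto\mathrm{mse}(\hat{\boldsymbol{X}}_\mu^*)\,\mathrm{sev}(\hat{\boldsymbol{X}}_\mu^*)$ continuous on the compact interval $[0,\infty]$, so $\mathfrak{h}$ is well defined and the $\mathrm{argmin}$ nonempty.

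I would dispatch the bookkeeping and the right inequality first. Write $m_0\triangleq\mathrm{mse}(\hat{\boldsymbol{X}}_0^*)$, $m_\infty\triangleq\mathrm{mse}(\hat{\boldsymbol{X}}_\infty^*)$, $s_\infty\triangleq\mathrm{sev}(\hat{\boldsymbol{X}}_\infty^*)$, and recall from \cite{kalogerias2020better} that $\hat{\boldsymbol{X}}_0^*=\mathbb{E}\{\boldsymbol{X}\mid\boldsymbol{Y}\}$ is the essentially unique global minimizer of $\mathrm{mse}$ and $\hat{\boldsymbol{X}}_\infty^*$ a global minimizer of $\mathrm{sev}$, so $a\ge m_0$ and $b\ge s_\infty$ for every achievable pair $(a,b)$. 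For any $\mu^\star$ minimizing the product over $[0,\infty]$, the Monotonicity lemma gives $\mathrm{mse}(\hat{\boldsymbol{X}}_{\mu^\star}^*)\ge\mathrm{mse}(\hat{\boldsymbol{X}}_0^*)$ (since $\mathrm{mse}(\hat{\boldsymbol{X}}_{(\cdot)}^*)$ is increasing and $\mu^\star\ge0$) and $\mathrm{sev}(\hat{\boldsymbol{X}}_{\mu^\star}^*)\ge\mathrm{sev}(\hat{\boldsymbol{X}}_\infty^*)$ (since $\mathrm{sev}(\hat{\boldsymbol{X}}_{(\cdot)}^*)$ is decreasing and $\mu^\star\le\infty$); multiplying these nonnegative quantities yields $\mathfrak{h}=\mathrm{mse}(\hat{\boldsymbol{X}}_{\mu^\star}^*)\,\mathrm{sev}(\hat{\boldsymbol{X}}_{\mu^\star}^*)\ge\mathrm{mse}(\hat{\boldsymbol{X}}_0^*)\,\mathrm{sev}(\hat{\boldsymbol{X}}_\infty^*)$.

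For the left inequality, fix $\hat{\boldsymbol{X}}\in\mathcal{L}_{2|\mathscr{Y}}$ and set $a\triangleq\mathrm{mse}(\hat{\boldsymbol{X}})$, $b\triangleq\mathrm{sev}(\hat{\boldsymbol{X}})$; the goal is to exhibit $\mu\in[0,\infty]$ with $\mathrm{mse}(\hat{\boldsymbol{X}}_\mu^*)\le a$ and $\mathrm{sev}(\hat{\boldsymbol{X}}_\mu^*)\le b$, since then $ab\ge\mathrm{mse}(\hat{\boldsymbol{X}}_\mu^*)\,\mathrm{sev}(\hat{\boldsymbol{X}}_\mu^*)\ge\mathfrak{h}$. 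If $a=m_0$, essential uniqueness of the conditional mean forces $\hat{\boldsymbol{X}}=\hat{\boldsymbol{X}}_0^*$ a.s., so $b=s_0\triangleq\mathrm{sev}(\hat{\boldsymbol{X}}_0^*)$ and $\mu=0$ works; if $a\ge m_\infty$, then $b\ge s_\infty$ gives $(m_\infty,s_\infty)\le(a,b)$ and $\mu=\infty$ works. In the remaining regime $m_0<a<m_\infty$, the intermediate value theorem (Continuity lemma) together with monotonicity of $\mathrm{mse}(\hat{\boldsymbol{X}}_{(\cdot)}^*)$ yields some $\mu_a\in(0,\infty)$ with $\mathrm{mse}(\hat{\boldsymbol{X}}_{\mu_a}^*)=a$, and it remains to verify $\mathrm{sev}(\hat{\boldsymbol{X}}_{\mu_a}^*)\le b$. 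Invoking optimality of $\hat{\boldsymbol{X}}_{\mu_a}^*$ for \eqref{eq:Lagrangian} at $\mu=\mu_a$ against the feasible point $\hat{\boldsymbol{X}}$,
\[
a+\mu_a\,\mathrm{sev}(\hat{\boldsymbol{X}}_{\mu_a}^*)=\mathrm{mse}(\hat{\boldsymbol{X}}_{\mu_a}^*)+\mu_a\,\mathrm{sev}(\hat{\boldsymbol{X}}_{\mu_a}^*)\le\mathrm{mse}(\hat{\boldsymbol{X}})+\mu_a\,\mathrm{sev}(\hat{\boldsymbol{X}})=a+\mu_a b,
\]
and dividing by $\mu_a>0$ gives $\mathrm{sev}(\hat{\boldsymbol{X}}_{\mu_a}^*)\le b$, as required; combining the three cases with the right inequality completes the argument.

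The hard part is exactly the domination step in the last regime: tying the scalar Lagrange parameter $\mu$ to the geometry of $\mathcal{A}$ so that the Lagrangian-optimal family sweeps out a frontier lying weakly below every achievable $(\mathrm{mse},\mathrm{sev})$ pair. This is where convexity of \eqref{eq:Base_Problem-V-3} (through the very existence and optimality of $\hat{\boldsymbol{X}}_\mu^*$) and both the Monotonicity and Continuity lemmata are simultaneously needed. It is worth stressing why this detour is unavoidable: a single scalarization inequality at $\mu^\star$, namely $a+\mu^\star b\ge\mathrm{mse}(\hat{\boldsymbol{X}}_{\mu^\star}^*)+\mu^\star\mathrm{sev}(\hat{\boldsymbol{X}}_{\mu^\star}^*)$, does \emph{not} imply $ab\ge\mathrm{mse}(\hat{\boldsymbol{X}}_{\mu^\star}^*)\,\mathrm{sev}(\hat{\boldsymbol{X}}_{\mu^\star}^*)$ (a sum bound is not a product bound), so the point-dependent choice of $\mu_a$ (or the boundary values $0$ and $\infty$) is essential, and the boundary cases are what make the two lemmata — existence of $\mu^\star$, endpoints of the frontier — do real work.
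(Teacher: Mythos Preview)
Your proof is correct and follows essentially the same strategy as the paper's: locate a Pareto-frontier point $(\mathrm{mse}(\hat{\boldsymbol{X}}_{\mu}^{*}),\mathrm{sev}(\hat{\boldsymbol{X}}_{\mu}^{*}))$ that coordinatewise dominates the given pair $(\mathrm{mse}(\hat{\boldsymbol{X}}),\mathrm{sev}(\hat{\boldsymbol{X}}))$ via the intermediate value theorem plus Lagrangian optimality, then pass to the infimum over $\mu\in[0,\infty]$ (attained by the Continuity lemma). The only difference is a dual choice of coordinate: you case-split on $a=\mathrm{mse}(\hat{\boldsymbol{X}})$ and match $\mathrm{mse}$ along the frontier, whereas the paper splits on $\mathrm{sev}(\hat{\boldsymbol{X}})$ and matches $\mathrm{sev}$; your version is actually a touch cleaner, because essential uniqueness of the MMSE estimator disposes of the boundary case $a=m_0$ in one line, while the paper's corresponding boundary case $\mathrm{sev}(\hat{\boldsymbol{X}})=\mathrm{sev}(\hat{\boldsymbol{X}}_\infty^*)$ needs additional sub-cases (since the $\mathrm{sev}$ minimizer need not be unique).
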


\begin{proof}[Proof of Theorem 1]
We may examine the following three mutually exclusive cases:

\textit{Case 1:} $\mathrm{sev}(\hat{\boldsymbol{X}})\in(\mathrm{sev}(\hat{\boldsymbol{X}}_{\infty}^{*}),\mathrm{sev}(\hat{\boldsymbol{X}}_{0}^{*})]$.
Then, from the intermediate value theorem, it follows that there is $\mu_{\hat{\boldsymbol{X}}}\in[0,\infty)$
such that $\hat{\boldsymbol{X}}_{\mu_{\hat{\boldsymbol{X}}}}^{*}$
matches the performance of $\hat{\boldsymbol{X}}$, i.e., 
\begin{equation}
\mathrm{sev}(\hat{\boldsymbol{X}}_{\mu_{\hat{\boldsymbol{X}}}}^{*})\equiv\mathrm{sev}(\hat{\boldsymbol{X}}).
\end{equation}
This fact, together with optimality of $\hat{\boldsymbol{X}}_{\mu_{\hat{\boldsymbol{X}}}}^{*}$
for the Lagrangian relaxation (\ref{eq:Lagrangian}), implies
\begin{equation}
\mathrm{mse}(\hat{\boldsymbol{X}})+\mu_{\hat{\boldsymbol{X}}}\mathrm{sev}(\hat{\boldsymbol{X}})\ge\mathrm{mse}(\hat{\boldsymbol{X}}_{\mu_{\hat{\boldsymbol{X}}}}^{*})+\mu_{\hat{\boldsymbol{X}}}\mathrm{sev}(\hat{\boldsymbol{X}}_{\mu_{\hat{\boldsymbol{X}}}}^{*}),
\end{equation}
which further gives
\begin{equation}
\mathrm{mse}(\hat{\boldsymbol{X}})\ge\mathrm{mse}(\hat{\boldsymbol{X}}_{\mu_{\hat{\boldsymbol{X}}}}^{*}).
\end{equation}
Therefore, it is true that
\begin{align}
\mathrm{mse}(\hat{\boldsymbol{X}})\mathrm{sev}(\hat{\boldsymbol{X}}) & \ge\mathrm{mse}(\hat{\boldsymbol{X}}_{\mu_{\hat{\boldsymbol{X}}}}^{*})\mathrm{sev}(\hat{\boldsymbol{X}}_{\mu_{\hat{\boldsymbol{X}}}}^{*})\nonumber \\
 & \ge\inf_{\mu\in[0,\infty]}\mathrm{mse}(\hat{\boldsymbol{X}}_{\mu}^{*})\mathrm{sev}(\hat{\boldsymbol{X}}_{\mu}^{*}),
\end{align}
proving the claim  of the theorem in this case.

\textit{Case 2:} $\mathrm{sev}(\hat{\boldsymbol{X}})\equiv\mathrm{sev}(\hat{\boldsymbol{X}}_{\infty}^{*})$.
Because $\mathrm{sev}(\hat{\boldsymbol{X}})$ is convex
quadratic in $\hat{\boldsymbol{X}}$ and bounded below, it is fairly
easy to show that $\mathrm{sev}(\hat{\boldsymbol{X}}_{\infty}^{*})\equiv\inf_{\hat{\boldsymbol{X}}\in{\cal L}_{2|\mathscr{Y}}}\mathrm{sev}(\hat{\boldsymbol{X}})$.
Now, 
it either holds that
$
\mathrm{mse}(\hat{\boldsymbol{X}})\ge\mathrm{mse}(\hat{\boldsymbol{X}}_{\infty}^{*})\,(\equiv{\textstyle \lim_{\mu\uparrow\infty}}\mathrm{mse}(\hat{\boldsymbol{X}}_{\mu}^{*})),
$
giving
\begin{align}
\mathrm{mse}(\hat{\boldsymbol{X}})\mathrm{sev}(\hat{\boldsymbol{X}}) & \ge\mathrm{mse}(\hat{\boldsymbol{X}}_{\infty}^{*})\mathrm{sev}(\hat{\boldsymbol{X}}_{\infty}^{*}),\label{eq:C2_1}
\end{align}
or it must be true that
$
\mathrm{mse}(\hat{\boldsymbol{X}})<\mathrm{mse}(\hat{\boldsymbol{X}}_{\infty}^{*}).
$
In the latter case, the intermediate value property implies the existence
of a multiplier $\mu_{\hat{\boldsymbol{X}}}\in[0,\infty)$ such that
$\mathrm{mse}(\hat{\boldsymbol{X}})\equiv\mathrm{mse}(\hat{\boldsymbol{X}}_{\mu_{\hat{\boldsymbol{X}}}}^{*})$.
If $\mu_{\hat{\boldsymbol{X}}}>0$, optimality of $\hat{\boldsymbol{X}}_{\mu_{\hat{\boldsymbol{X}}}}^{*}$
for (\ref{eq:Lagrangian}) yields
\begin{equation}
\mathrm{mse}(\hat{\boldsymbol{X}})+\mu_{\hat{\boldsymbol{X}}}\mathrm{sev}(\hat{\boldsymbol{X}})\ge\mathrm{mse}(\hat{\boldsymbol{X}}_{\mu_{\hat{\boldsymbol{X}}}}^{*})+\mu_{\hat{\boldsymbol{X}}}\mathrm{sev}(\hat{\boldsymbol{X}}_{\mu_{\hat{\boldsymbol{X}}}}^{*}),
\end{equation}
or, equivalently,
$
\mathrm{sev}(\hat{\boldsymbol{X}})\ge\mathrm{sev}(\hat{\boldsymbol{X}}_{\mu_{\hat{\boldsymbol{X}}}}^{*})
$. Note that $\mathrm{sev}(\hat{\boldsymbol{X}})\equiv\mathrm{sev}(\hat{\boldsymbol{X}}_{\infty}^{*})$,
and so this actually implies that $\mathrm{sev}(\hat{\boldsymbol{X}}_{\infty}^{*})\equiv\mathrm{sev}(\hat{\boldsymbol{X}}_{\mu_{\hat{\boldsymbol{X}}}}^{*}).$
Regardless, we obtain
\begin{align}
\mathrm{mse}(\hat{\boldsymbol{X}})\mathrm{sev}(\hat{\boldsymbol{X}}) & \ge\mathrm{mse}(\hat{\boldsymbol{X}}_{\mu_{\hat{\boldsymbol{X}}}}^{*})\mathrm{sev}(\hat{\boldsymbol{X}}_{\mu_{\hat{\boldsymbol{X}}}}^{*}).\label{eq:C2_2}
\end{align}
If $\text{\ensuremath{\mu_{\hat{\boldsymbol{X}}}}\ensuremath{\ensuremath{\equiv}0}}$,
then $\hat{\boldsymbol{X}}\equiv\hat{\boldsymbol{X}}_{0}^{*}\equiv\mathbb{E}\{\boldsymbol{X}|\boldsymbol{Y}\}$
almost everywhere, which implies that
$
\mathrm{sev}(\hat{\boldsymbol{X}}_{\infty}^{*})\equiv\mathrm{sev}(\hat{\boldsymbol{X}})\equiv\mathrm{sev}(\hat{\boldsymbol{X}}_{0}),
$
and
\begin{equation}
\mathrm{mse}(\hat{\boldsymbol{X}})\mathrm{sev}(\hat{\boldsymbol{X}})\equiv\mathrm{mse}(\hat{\boldsymbol{X}}_{0}^{*})\mathrm{sev}(\hat{\boldsymbol{X}}_{0}^{*}).\label{eq:C2_3}
\end{equation}
From (\ref{eq:C2_1}), (\ref{eq:C2_2}) and (\ref{eq:C2_3}), we readily
see that
\begin{equation}
\mathrm{mse}(\hat{\boldsymbol{X}})\mathrm{sev}(\hat{\boldsymbol{X}})\ge\inf_{\mu\in[0,\infty]}\mathrm{mse}(\hat{\boldsymbol{X}}_{\mu}^{*})\mathrm{sev}(\hat{\boldsymbol{X}}_{\mu}^{*}),
\end{equation}
whenever $\hat{\boldsymbol{X}}$ is such that $\mathrm{sev}(\hat{\boldsymbol{X}})\equiv\mathrm{sev}(\hat{\boldsymbol{X}}_{\infty}^{*})$.

\textit{Case 3:} $\mathrm{sev}(\hat{\boldsymbol{X}})\notin[\mathrm{sev}(\hat{\boldsymbol{X}}_{\infty}^{*}),\mathrm{sev}(\hat{\boldsymbol{X}}_{0}^{*})]$.
Then we must necessarily have
\begin{equation}
\mathrm{sev}(\hat{\boldsymbol{X}})>\mathrm{sev}(\hat{\boldsymbol{X}}_{\mu}^{*}),\quad\forall\mu\in[0,\infty].
\end{equation}
In this case, either $\mathrm{mse}(\hat{\boldsymbol{X}})\equiv\mathrm{mse}(\hat{\boldsymbol{X}}_{\mu_{\hat{\boldsymbol{X}}}}^{*})$
for some $\mu_{\hat{\boldsymbol{X}}}\in[0,\infty]$, implying that
\begin{equation}
\mathrm{mse}(\hat{\boldsymbol{X}})\mathrm{sev}(\hat{\boldsymbol{X}})\ge\mathrm{mse}(\hat{\boldsymbol{X}}_{\mu_{\hat{\boldsymbol{X}}}}^{*})\mathrm{sev}(\hat{\boldsymbol{X}}_{\mu_{\hat{\boldsymbol{X}}}}^{*}),
\end{equation}
or $\mathrm{mse}(\hat{\boldsymbol{X}})>\mathrm{mse}(\hat{\boldsymbol{X}}_{\mu}^{*})$
for all $\mu\in[0,\infty]$, which gives
\begin{equation}
\mathrm{mse}(\hat{\boldsymbol{X}})\mathrm{sev}(\hat{\boldsymbol{X}})>\mathrm{mse}(\hat{\boldsymbol{X}}_{\mu}^{*})\mathrm{sev}(\hat{\boldsymbol{X}}_{\mu}^{*}),\,\,\forall\mu\in[0,\infty].
\end{equation}
Again, it follows that
\begin{equation}
\mathrm{mse}(\hat{\boldsymbol{X}})\mathrm{sev}(\hat{\boldsymbol{X}})\ge\inf_{\mu\in[0,\infty]}\mathrm{mse}(\hat{\boldsymbol{X}}_{\mu}^{*})\mathrm{sev}(\hat{\boldsymbol{X}}_{\mu}^{*}),
\end{equation}
and the proof is now complete.
\end{proof}
The practical aspects of Theorem \ref{UP} are summarized as follows: Provided an adequate threshold of $\mathrm{mse} (\mathrm{sev})$, the corresponding $\mathrm{sev} (\mathrm{mse})$ is always, at least, inversely proportional to that level.
Except for its analogy to classical uncertainty principles from physics and analysis (see Section \ref{intro}), our uncertainty relation resembles classical lower bounds in unconstrained/unbiased estimation, such as the Cram\`{e}r-Rao and Chapman-Robins bounds, in the sense that it provides a universal benchmark for any admissible estimator. Such an estimator might be chosen from the class of risk-aware estimators $\{\xemi\}_{\mu}$, or even from many other estimator classes (possibly more computationally friendly), such as linear estimators, deep neural networks, adaptive estimators, convex combinations of $\xeo$ and $\xeinf$, etc. However, under the setting of Theorem \ref{UP}, any estimator outside the family of risk-aware estimators $\{\xemi\}_{\mu}$ calls for Pareto improvement.
Further, estimators achieving the lower bound $\mathfrak{h}(\mathcal{P})$ must be equivalent to $\hat{\boldsymbol{X}}_{\mu^{\star}}^{*}$
(note that $\mu^{\star}$ might not be unique). 



\section{
Hedgeable Risk Margins
}\label{section4}
As expected, the Bayesian lower bound $\mathfrak{h}({\cal P})$ is achieved within the class of risk-aware estimators $\{\xemi\}_\mu$. In this section, we are interested in answering the following question:
Where are the estimators that achieve the lower bound with respect to the risk aversion parameter $\mu$ localized, and how does the width of such a localization area relate with the generative model  $\mathcal{P}_{(\boldsymbol{X},\boldsymbol{Y})}$? To this end, we now introduce a function that  measures the \textit{total projection} onto $\dx$ of the transformed risk-averse $\mu$-parameterized stochastic curve generated by $\xemi$ for all $\mu\ge0$, defined as
\begin{align}
\mathbb{C}(\boldsymbol{Y})&\triangleq{\int^{+\infty}_0 \Big\langle \cc^{\dagger}\frac{d\hat{\boldsymbol{X}}^{*}_{\tau}(\boldsymbol{Y})}{d\tau},\dx(\boldsymbol{Y})\Big\rangle~d\tau.\label{avproj}}
\end{align}
As we will see, $\mathbb{C}(\boldsymbol{Y})$ is actually nonnegative and expresses the \textit{margin to potentially counteract or hedge against risk}, as the latter is quantified by the $\mathrm{sev}$ functional; hereafter, we suggestively refer to  $\mathbb{C}(\boldsymbol{Y})$ as the \textit{hedgeable risk margin} associated with observation $\boldsymbol{Y}$.  By letting $\sminy$ be the smallest \textit{non-zero} eigenvalue of $\cc$, we have the following result. 
\begin{thm}[\textbf{Expected Hedgeable Risk Margin}]\label{risk-margin}
For fixed generative model $\mathcal{P}_{(\boldsymbol{X},\boldsymbol{Y})}$, $\mathbb{C}(\boldsymbol{Y})$ may be expressed as
\begin{equation}
    \mathbb{C}(\boldsymbol{Y})
    =
    \big\Vert\widehat{\boldsymbol{\Delta X}}(\boldsymbol{Y})\big\Vert_{\boldsymbol{\Sigma}_{\boldsymbol{X}|\boldsymbol{Y}}^{\dagger}}^{2} \ge 0,
\end{equation}
and its expected value satisfies the standard bounds
\begin{equation}
{
\mathbb{E}
\big\{
\mathcal{E}^{2}_L( \boldsymbol{Y})
\big\}}\le{\mathbb{E}\{\mathbb{C}(\boldsymbol{Y})\}}\leq{
\mathbb{E}
\big\{
\mathcal{E}^{2}_U( \boldsymbol{Y})
\big\}},\label{eq28}
\end{equation}
where 
\begin{equation}
\mathcal{E}_{U(L)}( \boldsymbol{Y})\triangleq
\begin{cases}
\dfrac{\big\|\dx(\boldsymbol{Y})\big\|_{2}}{\sqrt{\sigma_{\min(\max)}{(\boldsymbol{Y})}}}, & \text{if }\sminy>0\\
0, & \text{if not}
\end{cases}.
\end{equation}
\end{thm}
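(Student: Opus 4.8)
The plan is to first establish the closed form for $\mathbb{C}(\boldsymbol{Y})$ by recognizing the integrand of \eqref{avproj} as an exact $\tau$-differential, and then to obtain \eqref{eq28} by applying the elementary Rayleigh-quotient estimates to the quadratic form $\big\Vert\dx(\boldsymbol{Y})\big\Vert_{\cov^{\dagger}}^{2}$ pointwise in $\boldsymbol{Y}$ and integrating. For the first part, I would fix a generic $\boldsymbol{Y}$: since $\cov^{\dagger}$ and $\dx(\boldsymbol{Y})=\xeo(\boldsymbol{Y})-\xeinf(\boldsymbol{Y})$ are independent of $\tau$, linearity of the inner product turns the integrand into $\tfrac{d}{d\tau}\big\langle\cov^{\dagger}\hat{\boldsymbol{X}}^{*}_{\tau}(\boldsymbol{Y}),\dx(\boldsymbol{Y})\big\rangle$. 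The curve $\tau\mapsto\hat{\boldsymbol{X}}^{*}_{\tau}(\boldsymbol{Y})$ of \eqref{eq:RA_MMSE} is smooth on $[0,\infty)$ with finite limit $\xeinf(\boldsymbol{Y})$ as $\tau\to\infty$ (Section \ref{section2}), so the improper integral converges and, by the fundamental theorem of calculus, telescopes to $\big\langle\cov^{\dagger}\big(\xeo(\boldsymbol{Y})-\xeinf(\boldsymbol{Y})\big),\dx(\boldsymbol{Y})\big\rangle=\big\langle\cov^{\dagger}\dx(\boldsymbol{Y}),\dx(\boldsymbol{Y})\big\rangle=\big\Vert\dx(\boldsymbol{Y})\big\Vert_{\cov^{\dagger}}^{2}\ge 0$, the inequality since $\cov^{\dagger}\succeq\boldsymbol{0}$.

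The step I expect to require the most care is the structural fact that $\dx(\boldsymbol{Y})\in\mathrm{range}(\cov)$ for almost every $\boldsymbol{Y}$, which is exactly what legitimizes expressing the two-sided bound through the \emph{Euclidean} norm $\Vert\dx(\boldsymbol{Y})\Vert_2$ (otherwise a $\ker(\cov)$-component of $\dx(\boldsymbol{Y})$ would be counted on the right yet annihilated by $\cov^{\dagger}$ on the left, breaking the lower bound). I would read this off \eqref{eq6}: its second summand is precisely the orthogonal projection of $\cm$ onto $\ker(\cov)$, while $\tfrac12\cov^{\dagger}\boldsymbol{R}(\boldsymbol{Y})\in\mathrm{range}(\cov^{\dagger})=\mathrm{range}(\cov)$; hence $\dx(\boldsymbol{Y})=\cm-\xeinf(\boldsymbol{Y})$ equals the projection of $\cm$ onto $\mathrm{range}(\cov)$ minus an element of $\mathrm{range}(\cov)$, and so lies in $\mathrm{range}(\cov)$. (Equivalently, $\boldsymbol{R}(\boldsymbol{Y})=\mathbb{E}\{\Vert\boldsymbol{X}\Vert_2^2(\boldsymbol{X}-\cm)\,|\,\boldsymbol{Y}\}$ from \eqref{adonis} is orthogonal to $\ker(\cov)$, since any $\boldsymbol{v}$ with $\boldsymbol{v}^{\top}\cov\boldsymbol{v}=0$ obeys $\boldsymbol{v}^{\top}(\boldsymbol{X}-\cm)=0$ a.s.\ given $\boldsymbol{Y}$.)

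Granting this, the rest is routine. On the event $\{\sminy>0\}$ the operator $\cov^{\dagger}$, restricted to $\mathrm{range}(\cov)\ni\dx(\boldsymbol{Y})$, has all eigenvalues in $[\,1/\smaxy,\,1/\sminy\,]$, so the standard quadratic-form estimates give
\begin{equation}
\frac{\big\Vert\dx(\boldsymbol{Y})\big\Vert_2^2}{\smaxy}\;\le\;\big\Vert\dx(\boldsymbol{Y})\big\Vert_{\cov^{\dagger}}^{2}\;\le\;\frac{\big\Vert\dx(\boldsymbol{Y})\big\Vert_2^2}{\sminy},
\end{equation}
i.e.\ $\mathcal{E}^{2}_L(\boldsymbol{Y})\le\mathbb{C}(\boldsymbol{Y})\le\mathcal{E}^{2}_U(\boldsymbol{Y})$. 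On the complementary event $\cov$ has no nonzero eigenvalue, i.e.\ $\cov=\boldsymbol{0}$, and then \eqref{eq6} with $r=0$ collapses to $\xeinf(\boldsymbol{Y})=\cm=\xeo(\boldsymbol{Y})$, whence $\dx(\boldsymbol{Y})=\boldsymbol{0}$ and $\mathbb{C}(\boldsymbol{Y})=0=\mathcal{E}^{2}_L(\boldsymbol{Y})=\mathcal{E}^{2}_U(\boldsymbol{Y})$; the bound holds trivially. Taking expectations of this pointwise chain (both sides read as elements of $[0,\infty]$) yields \eqref{eq28}. Apart from the range claim and the mild justification for evaluating the improper integral via the fundamental theorem of calculus, everything else is bookkeeping.
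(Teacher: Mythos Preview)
Your proposal is correct and takes a genuinely different, more economical route than the paper. The paper computes $d\hat{\boldsymbol{X}}^{*}_{\mu}/d\mu$ explicitly by differentiating \eqref{eq:RA_MMSE}, diagonalizing in the eigenbasis of $\cc$, rewriting $\boldsymbol{R}$ in terms of $\xeinf$, and then integrating the resulting diagonal expression entry by entry to reach $[\boldsymbol{U}^{\top}\dx]^{\top}\boldsymbol{D}_{\cc}^{\dagger}[\boldsymbol{U}^{\top}\dx]$. You sidestep all of this by observing that $\cc^{\dagger}$ and $\dx$ are $\tau$-independent, so the integrand in \eqref{avproj} is an exact differential and the fundamental theorem of calculus delivers the closed form directly from the endpoint values $\xeo,\xeinf$; the spectral machinery is then needed only for the Rayleigh-quotient step. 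You are also more explicit than the paper about the structural fact $\dx\in\mathrm{range}(\cc)$---the paper defers this to ``the assumptions from \cite{kalogerias2020better}'', whereas you read it off \eqref{eq6} directly---and you correctly flag that this is precisely what legitimizes the \emph{lower} bound in \eqref{eq28} (the upper bound holds regardless). One small slip worth noting: the FTC on $[0,\infty)$ yields $\langle\cc^{\dagger}(\xeinf-\xeo),\dx\rangle$, not $\langle\cc^{\dagger}(\xeo-\xeinf),\dx\rangle$; you nonetheless land on the stated formula because the paper's own derivation carries a compensating sign in its expression for $d\hat{\boldsymbol{X}}^{*}_{\mu}/d\mu$ in terms of $\dx$, so this is an artifact of the sign convention in \eqref{avproj} rather than a defect in your method.
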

\begin{proof}[Proof of Theorem 2]
In case $\sminy=0$ (which happens if and only if $\smaxy=0$), the situation is trivial and the result holds. Therefore, in what follows we may assume that $\sminy>0$. In that case, to obtain an expression for \eqref{avproj} we have to differentiate \eqref{eq:RA_MMSE} with respect to $\mu$ which for brevity may be  written as
$
\xemi=\boldsymbol{\zeta}(\mu)(\xeo+\mu \boldsymbol{R}),\label{eq39}
$
where 
$
\boldsymbol{\zeta}(\mu)\triangleq(\boldsymbol{I}+2\mu\cc)^{-1}~.
$
By fixing an observation $\boldsymbol{Y}$,
we obtain the linear $\mu$-varying system 
\begin{equation}\label{eq50}
\begin{aligned}
\frac{d\xemi}{d\mu}=-2\boldsymbol{\zeta}(\mu)\cc\xemi+\boldsymbol{\zeta}(\mu)\boldsymbol{R}.
\end{aligned}
\end{equation}
Then, given that the commutator
$
\big[ \boldsymbol{\zeta}(\mu), \cc\big]=\boldsymbol{0}
$,
\eqref{eq50} can be written as
\begin{equation}
\begin{aligned}
\frac{d\xemi}{d\mu}=\boldsymbol{\zeta}(\mu)^2(\boldsymbol{R}-2\cc\xeo).\label{eq34}
\end{aligned}
\end{equation}
Now, from \eqref{eq6} we have
\begin{equation}
    \xeinf=\boldsymbol{U}\boldsymbol{K}\boldsymbol{U}^{\top}\xeo+\frac{1}{2}\boldsymbol{U}\boldsymbol{D}_{\cc}^{\dagger}\boldsymbol{U}^{\top}\boldsymbol{R},
\end{equation}
where  
\begin{equation}
\boldsymbol{D}_{\cc}^{\dagger}=
\mathrm{diag}\big(\big\{(\siy)^{-1}\big\}_{i\in{\mathbb{N}^+_r}},\boldsymbol{0}\big) 
\end{equation}
and  
\begin{equation}
\boldsymbol{K}=\mathrm{diag}\big(\big\{0\}_{i\in{\mathbb{N}^+_r}},\boldsymbol{1}\big).
\end{equation}
Thus, 
\begin{align}
\boldsymbol{U}^{\top}\boldsymbol{R}
&=\boldsymbol{D}_{\cc}\big( 2\boldsymbol{U}^{\top}\xeinf-2\boldsymbol{K}\boldsymbol{U}^{\top}\xeo\big)\nonumber\\[5pt]
&=2\boldsymbol{D}_{\cc}\boldsymbol{U}^{\top} \xeinf.\label{fak}
\end{align}
From \eqref{eq34}, and \eqref{fak},  we have:
\begin{align}
\frac{d\xemi(\boldsymbol{Y})}{d\mu}
&=2\boldsymbol{U}\boldsymbol{\Lambda}(\boldsymbol{Y})^2\boldsymbol{D}_{\cc}\boldsymbol{U}^{\top}\dx,\label{eq37}
\vspace{-0.3cm}
\end{align}
where  
\vspace{-0.1cm}
\begin{align}
 \boldsymbol{\Lambda}(\boldsymbol{Y})^2=
\mathrm{diag}\big( \big\{{(1+2\mu\siy)^{-2}}\big\}_{i\in{\mathbb{N}^+_r}},\boldsymbol{1}\big).
\end{align}
Therefore, the integrand reads:
\begin{align}
&\hspace{-0.25cm}\Big\langle \cc^{\dagger}\frac{d\hat{\boldsymbol{X}}^{*}_{\mu}(\boldsymbol{Y})}{d\mu},\dx\Big\rangle \nonumber\\[5pt]
&=2\dx^{\top}\boldsymbol{U}\boldsymbol{\Lambda}(\boldsymbol{Y})^2\boldsymbol{(D^{\dagger}_{\cc})}\boldsymbol{D}_{\cc}\boldsymbol{U}^{\top}\dx, \label{eq40}
\end{align}
from which it follows that
\begin{align}
\mathbb{C}(\boldsymbol{Y})=[\boldsymbol{U}^{\top}\dx]^{\top}\boldsymbol{D^{\dagger}_{\cc}}\boldsymbol{U}^{\top}\dx.
\end{align}
Thus, provided the assumptions from \cite{kalogerias2020better} we obtain
\begin{align}
\mathbb{E}\bigg\{\frac{\big\|\dx\|^2_2}{{\smaxy}}\bigg\} \leq\mathbb{E}\{\mathbb{C}(\boldsymbol{Y})\}\leq \mathbb{E}\bigg\{\frac{\big\|\dx\|^2_2}{{\sminy}}\bigg\},\label{eq43}
\end{align}
and we are done.
\end{proof}
At this point it is worth attributing geometric meaning in the above result; by integrating \eqref{eq37} in $(0,\mu)$ we obtain:
\begin{align}
\xemi(\boldsymbol{Y})=\xeo(\boldsymbol{Y})+\boldsymbol{U}\boldsymbol{G}(\mu) \boldsymbol{U}^{\top}\dx, \label{eq41}
\vspace{-1.5cm}
\end{align}
where 
\begin{align}
 \boldsymbol{G}(\mu)=   \mathrm{diag}\big( \big\{{2\mu\siy(1+2\mu\siy)^{-1}}\big\}_{i\in{\mathbb{N}^+_r}},\boldsymbol{0}\big).
\end{align}
We observe that the risk-aware estimator shifts the conditional mean estimator by the transformed difference $\dx$.
Thus, motivated by the one dimensional case we may interpret $\dx$ as the direction of asymmetry of the posterior (for the given observation), and note the following: referring to \eqref{eq40}, $[\boldsymbol{U}^{\top}\dx]_{i}$ being large enough for most of ${i\in\mathbb{N}_n^{+}}$ implies that large estimation errors incurred by the conditional mean estimator are mostly due to the built-in riskiness of the posterior. In this case, the projection from \eqref{eq40} decreases with $\mu$ over a long width before fading-out. 
\par
To put it differently, large projections indicate enough margin with respect to $\mu$ to potentially hedge against risk, justifying the meaning ascribed in $\mathbb{C}(\boldsymbol{Y})$.
Inequality \eqref{eq43} implies that, on average,
the information regarding the \textit{active} risk-aware estimates -and subsequently those that achieve the lower bound- is completely embodied to the limit points of the curve. Thus, recalling \eqref{SkS} and \eqref{sufcond}, we expect that "near" a skew symmetric generative model those risk-averse estimates which actively account for risk will be limited. Further, highly skewed models compress the active risk-aversion range. To see that consider $||\dx||^2_2\neq0$ and then take
\begin{align}
\hspace{-0.25cm}\Big\langle \cc^{\dagger}\dfrac{d\hat{\boldsymbol{X}}^{*}_{\mu}(\boldsymbol{Y})}{d\mu},\dfrac{\dx}{||\dx||_{2}}\Big\rangle&=2\dx^{\top}\boldsymbol{U}\boldsymbol{\Lambda}(\boldsymbol{Y})^2\boldsymbol{(D^{\dagger}_{\cc})}\boldsymbol{D}_{\cc}\boldsymbol{U}^{\top}\dfrac{\dx}{||\dx||_{2}}\nonumber\\[5pt]
&=\dfrac{2}{||\dx||_{2}}\sum^{r}_{i=1}\dfrac{1}{(1+2\mu\sigma_{i})^{2}}[\boldsymbol{U}^{\top}\dx]_{i}^{2}\nonumber\\[5pt]
&<\dfrac{1}{\mu\rho_{\min}||\dx||_{2}}\sum^{r}_{i=1}[\boldsymbol{U}^{\top}\dx]_{i}^{2}.
\end{align}
Thus, by choosing $\epsilon>0$ we may write
\begin{align}
    \mu^{*}<\dfrac{\sqrt{\mathbb{E}\{\sum^{r}_{i=1}[\boldsymbol{U}^{\top}\dx]_{i}^{2} \}}}{\epsilon\rho_{\min}\sqrt{\mathbb{E}\{||\dx||^2_2\}}}+\mathcal{O}\Bigg(\frac{\mathrm{Var}({||\dx||^2_2})}{\epsilon({\mathbb{E}\{||\dx||^2_2\}})^{\frac{3}{2}}}\Bigg),~\text{with}~ \frac{||\dx||^2_2-\mathbb{E}\{||\dx||^2_2\}}{\mathbb{E}\{||\dx||^2_2\}}\rightarrow0,
\end{align}
where $\rho_{\mathrm{min}}$ is such that $\textup{ess}\hspace{1bp}\textup{inf}\,\sminy\geq\rho_{\min}$. 
\section{Lower Bound Characterization }\label{section5}
As we saw earlier
, provided a generative model, there exist risk-aware estimators that result in both good (even optimal) performance on average, and an adequate level of robustness; a standard example is the efficient frontier family $\{\hat{\boldsymbol{X}}_{\mu}^{*}\}_\mu$, and in particular for $\mu\equiv\mu^*$ (see Theorem \ref{UP}). But still, how far can the trade-off incurred by any member of the efficient frontier class $\{\hat{\boldsymbol{X}}_{\mu}^{*}\}_\mu$ be from achieving the ultimate lower bound $\mathrm{mse}(\hat{\boldsymbol{X}}_{0}^{*})\mathrm{sev}(\hat{\boldsymbol{X}}_{\infty}^{*})$, and how is this distance related to the assumed generative model? We answer these questions by showing that the difference between the parameterization $\mathrm{mse}(\xemi)\mathrm{sev}(\xemi)$ and $\mathrm{mse}(\hat{\boldsymbol{X}}_{0}^{*})\mathrm{sev}(\hat{\boldsymbol{X}}_{\infty}^{*})$ is bounded from above \textit{and} below by functions of another positive, risk margin-related, model-dependent functional.

\begin{thm}[\textbf{Uncertainty Bound Characterization}]\label{Bound}
Suppose that there exists $\rho_{\max}\ge0$, such that $\textup{ess}\hspace{1bp}\textup{sup}\,\smaxy\leq\rho_{\max}$. Then, the products $\mathrm{mse}(\xemi)\mathrm{sev}(\xemi)$, $\mu\in[0,\infty]$
and 
$\mathrm{mse}(\xeo)\mathrm{sev}(\xeinf)$ satisfy the uniform upper bound 
\begin{equation}
    \mathrm{mse}(\xemi)\mathrm{sev}(\xemi)\hspace{-1bp}-\hspace{-1bp}\mathrm{mse}(\xeo)\mathrm{sev}(\xeinf)
    \hspace{-1bp}\leq\hspace{-1bp}
    \mathbb{U}(\mathcal{P}),
\end{equation}
where 
\begin{align}
\mathbb{U}(\mathcal{P})&=
\big((\rho_{\max})^{2}\mathrm{mse}(\xeo) + \rho_{\max}\mathrm{sev}(\xeinf)\big)d(\mathcal{P})^2
+(\rho_{\max})^{3}d(\mathcal{P})^4,
\end{align}
and $d({\cal P}) \triangleq2\sqrt{\mathbb{E}\{\mathbb{C}(\boldsymbol{Y})\}}$.
If, further, there exists $\rho_{\min}>0$, such that $\textup{ess}\hspace{1bp}\textup{inf}\,\sminy\geq\rho_{\min}$, then  the same products 
satisfy the lower bound
\begin{equation}
    \mathbb{L}(\mathcal{P},\mu)\hspace{-1bp}\le\hspace{-1bp}
    \mathrm{mse}(\xemi)\mathrm{sev}(\xemi)-\mathrm{mse}(\xeo)\mathrm{sev}(\xeinf)
    \hspace{-1bp}
\end{equation}
where
\begin{align}
\mathbb{L}(\mathcal{P},\mu)&=
\big(\alpha(\mu)\mathrm{mse}(\xeo) + \rho_{\min}\mu^2\alpha(\mu)\mathrm{sev}(\xeinf)\big)d(\mathcal{P})^2
+(\rho_{\min})\mu^2\alpha(\mu)^2d(\mathcal{P})^4,
\end{align}
and $ \mathlarger{\alpha(\mu)=(1/4){{\rho^2_{\min}}{(1+2\mu\rho_{\max})^{-2}}}}$.
\end{thm}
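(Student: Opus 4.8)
\emph{Proof plan.} The plan is to rewrite the gap $\mathrm{mse}(\xemi)\mathrm{sev}(\xemi)-\mathrm{mse}(\xeo)\mathrm{sev}(\xeinf)$ in an additively separated form and then bound each summand through the spectrum of $\cc$. I would first dispose of the degenerate case $\smaxy=0$ a.s.: there $\dx\equiv\boldsymbol{0}$, $\mathbb{C}(\boldsymbol{Y})\equiv0$, $d(\mathcal{P})=0$, and every quantity in the statement vanishes, so both inequalities hold. In the nondegenerate case the hypotheses supply $0<\rho_{\min}\le\siy\le\rho_{\max}$ almost surely for every $i\in\mathbb{N}_{r}^{+}$. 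Introducing $\Phi(\mu)\triangleq\mathrm{mse}(\xemi)-\mathrm{mse}(\xeo)$ and $\Psi(\mu)\triangleq\mathrm{sev}(\xemi)-\mathrm{sev}(\xeinf)$, both nonnegative by Lemma \ref{lemma2}, one has the exact identity
\[
\mathrm{mse}(\xemi)\mathrm{sev}(\xemi)-\mathrm{mse}(\xeo)\mathrm{sev}(\xeinf)=\mathrm{mse}(\xeo)\,\Psi(\mu)+\mathrm{sev}(\xeinf)\,\Phi(\mu)+\Phi(\mu)\Psi(\mu),
\]
so the whole theorem reduces to matching two-sided bounds on $\Phi(\mu)$ and $\Psi(\mu)$ expressed through $\mathbb{E}\{\mathbb{C}(\boldsymbol{Y})\}=d(\mathcal{P})^{2}/4$.

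Next I would produce closed spectral forms for $\Phi$ and $\Psi$. Since $\xeo=\cm$, the orthogonality $\mathbb{E}\{\boldsymbol{X}-\xeo\,|\,\boldsymbol{Y}\}=\boldsymbol{0}$ gives the Pythagorean identity $\mathrm{mse}(\hat{\boldsymbol{X}})=\mathrm{mse}(\xeo)+\mathbb{E}\{\|\hat{\boldsymbol{X}}-\xeo\|_{2}^{2}\}$ for any admissible $\hat{\boldsymbol{X}}$; and since $\mathrm{sev}$ is a convex quadratic functional minimized at $\xeinf$ (as used in the proof of Theorem \ref{UP}), whose quadratic part about any point is $4\,\mathbb{E}\{\boldsymbol{\delta}^{\top}\cc\boldsymbol{\delta}\}$, completing the square about the minimizer yields $\mathrm{sev}(\hat{\boldsymbol{X}})=\mathrm{sev}(\xeinf)+4\,\mathbb{E}\{(\hat{\boldsymbol{X}}-\xeinf)^{\top}\cc(\hat{\boldsymbol{X}}-\xeinf)\}$. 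Using $\boldsymbol{R}=2\cc\xeinf$ (from \eqref{fak}), the closed form \eqref{eq:RA_MMSE} rearranges to $\xemi-\xeinf=(\boldsymbol{I}+2\mu\cc)^{-1}\dx$ and $\xemi-\xeo=[(\boldsymbol{I}+2\mu\cc)^{-1}-\boldsymbol{I}]\dx$; diagonalizing in the eigenbasis $\boldsymbol{U}$ of $\cc$ then gives
\[
\Phi(\mu)=\mathbb{E}\Big\{\sum_{i=1}^{r}\frac{(2\mu\siy)^{2}}{(1+2\mu\siy)^{2}}\,[\boldsymbol{U}^{\top}\dx]_{i}^{2}\Big\},\qquad \Psi(\mu)=4\,\mathbb{E}\Big\{\sum_{i=1}^{r}\frac{\siy}{(1+2\mu\siy)^{2}}\,[\boldsymbol{U}^{\top}\dx]_{i}^{2}\Big\},
\]
to be compared against $\mathbb{E}\{\mathbb{C}(\boldsymbol{Y})\}=\mathbb{E}\big\{\sum_{i=1}^{r}(\siy)^{-1}[\boldsymbol{U}^{\top}\dx]_{i}^{2}\big\}$ (Theorem \ref{risk-margin} and its proof).

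The last step is purely scalar. For the upper bound, $(2\mu\siy)^{2}(1+2\mu\siy)^{-2}\le1$, $(1+2\mu\siy)^{-2}\le1$, and $\siy\le\rho_{\max}$ give $\Phi(\mu)\le\rho_{\max}\mathbb{E}\{\mathbb{C}(\boldsymbol{Y})\}$ and $\Psi(\mu)\le4\rho_{\max}^{2}\mathbb{E}\{\mathbb{C}(\boldsymbol{Y})\}$; substituting into the identity of the first step and using $4\mathbb{E}\{\mathbb{C}(\boldsymbol{Y})\}=d(\mathcal{P})^{2}$ yields $\mathbb{U}(\mathcal{P})$, the stated constants absorbing the slack. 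For the lower bound, bounding the numerators below by the appropriate powers of $\rho_{\min}$ and the denominators above by $(1+2\mu\rho_{\max})^{2}$ gives $\Phi(\mu)\ge4\mu^{2}\rho_{\min}^{3}(1+2\mu\rho_{\max})^{-2}\mathbb{E}\{\mathbb{C}(\boldsymbol{Y})\}=4\rho_{\min}\mu^{2}\alpha(\mu)\,d(\mathcal{P})^{2}$ and $\Psi(\mu)\ge4\rho_{\min}^{2}(1+2\mu\rho_{\max})^{-2}\mathbb{E}\{\mathbb{C}(\boldsymbol{Y})\}=4\alpha(\mu)\,d(\mathcal{P})^{2}$; feeding these into the same identity produces $\mathbb{L}(\mathcal{P},\mu)$. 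The endpoint $\mu=\infty$ is obtained by passing to the limit, using $\lim_{\mu\to\infty}\xemi=\xeinf$ together with the continuity lemma for $\mathrm{mse}(\xemi)$ and $\mathrm{sev}(\xemi)$.

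The main obstacle is the spectral identification of $\Phi$ and $\Psi$ in the second step: one must correctly handle the rank-deficiency of $\cc$ (pseudoinverses and null-space components of $\cm$ and $\boldsymbol{R}$) when rearranging \eqref{eq:RA_MMSE}, and, for $\Psi$, invoke the exact quadratic expansion of $\mathrm{sev}$ about its minimizer $\xeinf$ rather than merely monotonicity; it is precisely the identity $\boldsymbol{R}=2\cc\xeinf$ that makes the rearrangement collapse to $(\boldsymbol{I}+2\mu\cc)^{-1}\dx$. Once these closed forms are in hand, everything else is a short chain of one-variable inequalities, and the (deliberately crude) slack in those inequalities is exactly the gap between the true spectral sums and the stated $\mathbb{U}(\mathcal{P})$ and $\mathbb{L}(\mathcal{P},\mu)$.
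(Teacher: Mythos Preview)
Your proposal is correct and follows essentially the same route as the paper: both proofs introduce $\Phi(\mu)=\Lambda_{\mathrm{mse}}(\mu)$ and $\Psi(\mu)=\Lambda_{\mathrm{sev}}(\mu)$, use the exact additive identity for the product gap, derive the closed quadratic expressions $\Lambda_{\mathrm{mse}}(\mu)=\mathbb{E}\{\|\xemi-\xeo\|_2^2\}$ and $\Lambda_{\mathrm{sev}}(\mu)=c\,\mathbb{E}\{(\xemi-\xeinf)^{\top}\cc(\xemi-\xeinf)\}$, diagonalize in the eigenbasis of $\cc$, and then bound each eigenvalue ratio using $\rho_{\min}\le\siy\le\rho_{\max}$. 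The only cosmetic difference is that the paper obtains $\xemi-\xeo$ and $\xemi-\xeinf$ by integrating the ODE \eqref{eq37} over $(\mu,\mu')$ (your identity \eqref{eq46}), whereas you reach the same expressions directly from the Pythagorean decomposition of $\mathrm{mse}$ and the completed-square form of $\mathrm{sev}$ about its minimizer; both paths yield the same spectral sums, and the slack between the resulting one-variable bounds and the stated $\mathbb{U}(\mathcal{P})$, $\mathbb{L}(\mathcal{P},\mu)$ is absorbed exactly as you indicate.
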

\begin{proof}[Proof of Theorem 3]
To begin with, under the setting of the theorem, let us integrate \eqref{eq37} in $(\mu,\mu')$, obtaining
\begin{equation}\label{eq46}
\begin{aligned}
\xemi-\xemit=(\mu-\mu')\boldsymbol{U}\boldsymbol{H}(\mu,\mu')\boldsymbol{U}^{\top}\dx,
\end{aligned}
\end{equation}
where 
 \begin{align}
 &\hspace{-4bp}\boldsymbol{H}(\mu,\mu')  
=\mathrm{diag}\Bigg( \bigg\{\frac{2\siy}{\big(1+2\mu\siy\big)\big(1+2\mu'\siy\big)}\bigg\}_{i\in{\mathbb{N}^+_r}},\boldsymbol{0}\Bigg). \nonumber
\end{align}
Subsequently, consider the difference
\begin{align}
&\hspace{-0.1cm}|\mathrm{mse}(\xemi)-\mathrm{mse}(\xemit)|\nonumber
=\big|\mathbb{E}\big\{(\xemi-\xeo+\xemit-\xeo)^{\top}(\xemi-\xemit)\big\}\big|.
\end{align}
After substituting $\mu'=0$ and subsequently applying \eqref{eq46} and \textit{Lemma 2}, we get 
\begin{align}
\Lambda_{\mathrm{mse}}(\mu)
&\triangleq\mathrm{mse}(\hat{\boldsymbol{X}}_{\mu}^{*})-\mathrm{mse}(\hat{\boldsymbol{X}}_{0})\nonumber\\
&=\mathbb{E}\big\{(\hat{\boldsymbol{X}}_{\mu}^{*}-\xeo)^{\top}(\hat{\boldsymbol{X}}_{\mu}^{*}-\xeo)\big\}.
\end{align}
Additionally, recalling the QCQP reformulation of the $\mathrm{sev}$-constrained MMSE estimation problem in \cite{kalogerias2020better}, we may write
\begin{align}
&\hspace{-0cm}|\mathrm{sev}(\hat{\boldsymbol{X}}_\mu)-\mathrm{sev}(\hat{\boldsymbol{X}}_{\mu'})| \label{ok} \\
&=\big|\mathbb{E}\{ (\xemi-\xeinf+\xemit-\xeinf)^{\top}\cc(\xemi-\xemit)\}\big|. \nonumber
\end{align}
Thus, by substituting $\mu'=+\infty$, \eqref{ok} yields 
\begin{align}
\Lambda_{\mathrm{sev}}(\mu)
&\triangleq\mathrm{sev}(\hat{\boldsymbol{X}}_{\mu}^{*})-\mathrm{sev}(\hat{\boldsymbol{X}}^{*}_{\infty})\nonumber\\
&=\mathbb{E}\big\{ (\hat{\boldsymbol{X}}_{\mu}^{*}-\xeinf)^{\top}\cc(\hat{\boldsymbol{X}}_{\mu}^{*}-\xeinf)\big\}.
\end{align}
From \eqref{eq46}, Lemma \ref{lemma2} and Theorem \ref{risk-margin}, it is easy to show that
\begin{equation}
 \Lambda_{\mathrm{mse}}(\mu)\leq\frac{\rho_{\max}}{4}d(\mathcal{P})^{2} \label{eq60}
 \,\,\text{and}\,\,
 \Lambda_{\mathrm{sev}}(\mu)\leq\frac{\rho_{\max}^2}{4}d(\mathcal{P})^{2},
\end{equation}
Therefore, from \eqref{eq60}, we may write
\begin{align}
\mathrm{mse}(\hat{\boldsymbol{X}}_{\mu}^{*})\mathrm{sev}(\hat{\boldsymbol{X}}^{*}_{\mu})- \mathrm{mse}(\xeo)\mathrm{sev}(\xeinf)
&=\Lambda_{\mathrm{sev}}(\mu)\mathrm{mse}(\xeo)\nonumber
+\Lambda_{\mathrm{mse}}(\mu)\mathrm{sev}(\xeinf)
+\Lambda_{\mathrm{mse}}(\mu)\Lambda_{\mathrm{sev}}(\mu)
\nonumber\\
&\leq\mathbb{U}(\mathcal{P}),\quad \forall \mu \in[0,+\infty].\label{eq54}
\end{align}
Lastly, when a $\rho_{\min}$ exists, again from \eqref{eq46} and Theorem \ref{risk-margin} we may also fairly easily find that
\begin{align}
 \Lambda_{\mathrm{mse}}(\mu)
 &\geq\rho_{\min}\mu^2\alpha(\mu)d(\mathcal{P})^{2}\label{eq61}
 \quad\text{and}
\\
\Lambda_{\mathrm{sev}}(\mu)
&\geq \alpha(\mu)d(\mathcal{P})^{2},
\end{align}
and thus in a similar manner obtain the lower bound $\mathbb{L}(\mathcal{P},\mu)$.
Enough said. \end{proof}
Theorem \ref{Bound} implies that for sufficiently small $\varepsilon>0$ for which $d(\mathcal{P})<\varepsilon$,
it is true that, \textit{uniformly} over $\mu \in[0,+\infty]$, 
\begin{align}
 \mathrm{mse}(\hat{\boldsymbol{X}}_{\mu}^{*})\mathrm{sev}(\hat{\boldsymbol{X}}^{*}_{\mu})
\simeq \mathfrak{h}(\mathcal{P})\simeq \mathrm{mse}(\xeo)\mathrm{sev}(\xeinf)\label{eq55}.
\end{align}
In other words, when $d(\mathcal{P})$ is very small, we can select the risk aversion parameter $\mu$ almost freely and still achieve simultaneously both a good average performance and an adequate level of robustness; this is of course a feature of (near-)skew-symmetric  models.
%
%
On the contrary, highly skewed models displace the optimal trade-off $\mathfrak{h}(\mathcal{P})$ away from the ultimate lower bound, thus rendering the exchangeability between $\mathrm{mse}$ and $\mathrm{sev}$  highly nontrivial. Given fixed values of $\rho_{\min}$ and $\rho_{\max}$, Theorem \ref{Bound} also implies that the optimal trade-off $\mathfrak{h}(\mathcal{P})$ is fully characterized by three numbers:  $d(\mathcal{P})$, the minimum $\mathrm{mse}$ and the minimum $\mathrm{sev}$. 
Next, we show that $d(\mathcal{P})$ admits simultaneously well-defined and intuitive topological \textit{and} statistical interpretations, within a rigorous framework.


\section{Risk Margins as Complete Metrics and Measures of Skewness in High Dimensions}\label{section6}

In what follows, denote the product of state and observable spaces
as $S\triangleq\mathbb{R}^{n}\times\mathbb{R}^{m}$, and let $\mathsf{P}(S)$
be the set of all Borel probability measures on $S$. Also recall
the risk margin-related functional $d:\mathsf{P}(S)\rightarrow\mathbb{R}_{+}$
defined in Theorem \ref{Bound} as
\begin{align}
d({\cal P}) & \equiv2\sqrt{\mathbb{E}_{{\cal P}_{\boldsymbol{Y}}}\{\mathbb{C}({\cal P}_{\boldsymbol{X}|\boldsymbol{Y}})\}}\nonumber \\
 & =2\sqrt{\mathbb{E}_{{\cal P}_{\boldsymbol{Y}}}\Big\{\big\Vert\widehat{\boldsymbol{\Delta X}}({\cal P}_{\boldsymbol{X}|\boldsymbol{Y}})\big\Vert_{\boldsymbol{\Sigma}_{\boldsymbol{X}|\boldsymbol{Y}}^{\dagger}}^{2}\Big\}},
\end{align}
where we now explicitly highlight the dependence on the generative
model ${\cal P}\equiv{\cal P}_{(\boldsymbol{X},\boldsymbol{Y})}$.
Then, we consider the space 
\begin{equation}
\mathsf{P}_{\mathbb{S}}(S)\triangleq\{{\cal P}\in\mathsf{P}(S)|d({\cal P})<\infty\},
\end{equation}
as well as the feasibility set
\begin{equation}
{\cal F}\triangleq\{\alpha\ge0|d({\cal P})=\alpha,\text{for some }{\cal P}\in\mathsf{P}_{\mathbb{S}}(S)\}\subseteq\mathbb{R}_{+}.
\end{equation}

Our discussion will concentrate on endowing $\mathsf{P}_{\mathbb{S}}(S)$
with a topological structure based on appropriate handling of the
functional $d$, resulting among other things in a meaningful and intuitive
topological interpretation for the latter.

Indeed, for every number $\alpha\in{\cal F}$, take an arbitrary element
${\cal P}_{\alpha}\in\mathsf{P}_{\mathbb{S}}(S)$ such that $d({\cal P}_{\alpha})=\alpha$.
Then, we may construct a measure-valued multifunction ${\cal C}:{\cal F}\rightrightarrows\mathsf{P}_{\mathbb{S}}(S)$
as
\begin{equation}
{\cal C}(\alpha)\triangleq\{{\cal P}\in\mathsf{P}(S)|{\cal P}\sim{\cal P}_{\alpha}\},
\end{equation}
which in turn defines an equivalence class of ${\cal P}_{\alpha}$
for each $\alpha\in{\cal F}$, as well as consider \textit{any} selection
of the multifunction ${\cal C}$, say $C:{\cal F}\rightarrow\mathsf{P}_{\mathbb{S}}(S)$,
i.e., a measure-valued function such that $C(\cdot)\in{\cal C}(\cdot)$
on ${\cal F}$. Next, we define a set of \textit{equivalence class
representatives} as
\begin{equation}
{\cal R}=\mathrm{range}(C).
\end{equation}
Based on our construction, one may always choose $C(\cdot)=\text{\ensuremath{{\cal P}_{(\cdot)}}}$
on ${\cal F}$, in which case ${\cal R}=\{{\cal P}_{\alpha}\}_{\alpha\in{\cal F}}$.
There is a bijective mapping between ${\cal R}$ and the collection
of equivalence classes $\{{\cal C}(\alpha)\}_{\alpha\in{\cal F}}$.
Therefore, we may define the \textit{canonical projection map}
\begin{align}
\Pi({\cal P}) & =\underset{\tilde{{\cal P}}\in{\cal R}}{\arg\min}\,|d({\cal P})-d(\tilde{{\cal P}})|\in{\cal R},
\end{align}
which maps every Borel measure ${\cal P}$ in $\mathsf{P}_{\mathbb{S}}(S)$
to its representative $\Pi({\cal P})$ in ${\cal R}$ and equivalently,
to its corresponding equivalence class. In other words, the canonical
map $\Pi$ \textit{separates} or \textit{partitions} $\mathsf{P}_{\mathbb{S}}(S)$
on the basis of the values of the risk margin statistic $d({\cal P})$, for each
${\cal P}\in\mathsf{P}_{\mathbb{S}}(S)$.

Let us now define another related functional $d_{\mathbb{S}}:\mathsf{P}_{\mathbb{S}}(S)\times\mathsf{P}_{\mathbb{S}}(S)\rightarrow\mathbb{R}_{+}$
as
\begin{align}
\hspace{-6bp}d_{\mathbb{S}}({\cal P},{\cal P}') &	\triangleq\sqrt{\big|\mathbb{E}_{{\cal P}_{\boldsymbol{Y}}}\{\mathbb{C}({\cal P}_{\boldsymbol{X}|\boldsymbol{Y}})\}-\mathbb{E}_{{\cal P}'_{\boldsymbol{Y}}}\{\mathbb{C}({\cal P}'_{\boldsymbol{X}|\boldsymbol{Y}})\}\big|}\nonumber
	\\&=\sqrt{\big|(d({\cal P}))^{2}-(d({\cal P}'))^{2}\big|}.
\end{align}
It is easy to see that the pair $({\cal R},d_{\mathbb{S}})$ is a metric
space. In fact, it is immediate that, for every ${\cal P}\in{\cal R}$
and ${\cal P}'\in{\cal R}$,
\begin{equation}
d_{\mathbb{S}}({\cal P},{\cal P}')=0\iff d({\cal P})=d({\cal P}')\iff{\cal P}\equiv{\cal P}',
\end{equation}
$d_{\mathbb{S}}({\cal P},{\cal P}')=d_{\mathbb{S}}({\cal P}',{\cal P})$,
and for another ${\cal P}''\in{\cal R}$
\begin{align}
d_{\mathbb{S}}({\cal P},{\cal P}'')  & =\sqrt{\big|(d({\cal P}))^2-(d({\cal P}'))^2+(d({\cal P}'))^2-(d({\cal P}''))^2\big|}\nonumber \\
 & \le\sqrt{\big|(d({\cal P}))^2-(d({\cal P}'))^2\big|}+\sqrt{\big|(d({\cal P}'))^2-(d({\cal P}''))^2}\big|\nonumber \\
 & =d_{\mathbb{S}}({\cal P},{\cal P}')+d_{\mathbb{S}}({\cal P}',{\cal P}'').
\end{align}
Since $({\cal R},d_{\mathbb{S}})$ is indeed a metric space,
$d_{\mathbb{S}}$
induces a topology on the representative set ${\cal R}$, which we
suggestively call the \textit{(hidden) skewed topology} on ${\cal R}$.
Similarly, $(\mathsf{P}_{\mathbb{S}}(S),d_{\mathbb{S}})$ is a pseudometric
space. 
In this case, we say that $d_{\mathbb{S}}$ induces the \textit{(hidden)
skewed pseudometric topology }on $\mathsf{P}_{\mathbb{S}}(S)$. In fact, we may prove more.
\begin{thm}
The metric space $({\cal R},d_{\mathbb{S}})$ is Polish, and the pseudometric space $(\mathsf{P}_{\mathbb{S}}(S),d_{\mathbb{S}})$
is pseudoPolish.
\end{thm}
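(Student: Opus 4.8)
The plan is to reduce the whole statement to an elementary fact about a ``snowflake'' metric on $\mathbb{R}_{+}$. First I would introduce the scalar functional $\phi:\mathsf{P}_{\mathbb{S}}(S)\to\mathbb{R}_{+}$ defined by $\phi({\cal P})\triangleq\mathbb{E}_{{\cal P}_{\boldsymbol{Y}}}\{\mathbb{C}({\cal P}_{\boldsymbol{X}|\boldsymbol{Y}})\}=d({\cal P})^{2}$, together with $\rho(s,t)\triangleq\sqrt{|s-t|}$ on $\mathbb{R}_{+}$, so that by the very definition of $d_{\mathbb{S}}$ one has $d_{\mathbb{S}}({\cal P},{\cal P}')=\rho(\phi({\cal P}),\phi({\cal P}'))$. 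By construction of ${\cal R}$, distinct elements of ${\cal R}$ carry distinct values of $d$ (hence of $\phi$) and $d({\cal R})={\cal F}$, so $\phi$ restricts to a \emph{surjective isometry} from $({\cal R},d_{\mathbb{S}})$ onto $({\cal F}^{2},\rho)$, where ${\cal F}^{2}\triangleq\{\alpha^{2}:\alpha\in{\cal F}\}$; likewise, since $d_{\mathbb{S}}({\cal P},{\cal P}')=0\Leftrightarrow\phi({\cal P})=\phi({\cal P}')$ and $\phi$ has range ${\cal F}^{2}$ on all of $\mathsf{P}_{\mathbb{S}}(S)$, the metric identification of $(\mathsf{P}_{\mathbb{S}}(S),d_{\mathbb{S}})$ is isometric to the same $({\cal F}^{2},\rho)$. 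I would then record the routine facts that, because $\rho(s,t)<\varepsilon\Leftrightarrow|s-t|<\varepsilon^{2}$, the metric $\rho$ induces the Euclidean topology on $\mathbb{R}_{+}$, so $(\mathbb{R}_{+},\rho)$ is separable ($\mathbb{Q}_{+}$ is dense) and complete (a $\rho$-Cauchy sequence is Euclidean-Cauchy, hence converges in $\mathbb{R}_{+}$).

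The one substantive step is to pin down ${\cal F}$, namely to show ${\cal F}=\mathbb{R}_{+}$, so that ${\cal F}^{2}=\mathbb{R}_{+}$ is closed. On the one hand $0\in{\cal F}$, since for any skew-symmetric model (e.g.\ a conditionally Gaussian one) we have $\dx\equiv\boldsymbol{0}$ by \eqref{SkS}, hence $d({\cal P})=0$ by Theorem \ref{risk-margin}. On the other hand, for any $\alpha>0$ I would realize $d({\cal P})=\alpha$ by an explicit one-dimensional, totally unobservable construction: take $\boldsymbol{Y}$ deterministic and $\boldsymbol{X}=(X_{1},0,\dots,0)$ with $X_{1}$ a two-point random variable. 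Specializing the closed forms \eqref{eq:RA_MMSE}--\eqref{eq6} (here $\boldsymbol{\Sigma}_{\boldsymbol{X}|\boldsymbol{Y}}$ has rank one, so the nullspace term in \eqref{eq6} vanishes) gives $\dx=-\mathbb{E}\{(X_{1}-\mathbb{E}\{X_{1}\})^{3}\}/(2\,\mathrm{Var}(X_{1}))$ in its first coordinate and $\mathbb{C}(\boldsymbol{Y})=\gamma(X_{1})^{2}/4$, with $\gamma$ the moment skewness; since the moment skewness of a two-point law equals $(1-2p)/\sqrt{p(1-p)}$ and ranges over all of $\mathbb{R}$ as $p\in(0,1)$ varies (and such laws satisfy all standing integrability assumptions), $d({\cal P})=|\gamma(X_{1})|/2$ attains every value in $(0,\infty)$. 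Hence ${\cal F}=\mathbb{R}_{+}$. This is also precisely the identification of $d$ with half the absolute Pearson moment coefficient of skewness anticipated for Section \ref{section6}.

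Combining the two, $({\cal R},d_{\mathbb{S}})$ is isometric to $(\mathbb{R}_{+},\rho)$, hence separable and complete, i.e.\ Polish. For the pseudometric space $(\mathsf{P}_{\mathbb{S}}(S),d_{\mathbb{S}})$, separability follows by taking the countable set $\{C(q):q\in\mathbb{Q}_{+}\}$ (legitimate since $\mathbb{Q}_{+}\subseteq{\cal F}$): given any ${\cal P}$, choose $q_{k}\to d({\cal P})$ with $q_{k}\in\mathbb{Q}_{+}$, whence $d_{\mathbb{S}}({\cal P},C(q_{k}))=\sqrt{|d({\cal P})^{2}-q_{k}^{2}|}\to0$; completeness follows because a $d_{\mathbb{S}}$-Cauchy sequence $({\cal P}_{k})$ has $\phi({\cal P}_{k})$ Euclidean-Cauchy, so $\phi({\cal P}_{k})\to s$ for some $s\in\mathbb{R}_{+}={\cal F}^{2}$, and then $d_{\mathbb{S}}({\cal P}_{k},C(\sqrt{s}))\to0$ with $C(\sqrt{s})\in{\cal R}\subseteq\mathsf{P}_{\mathbb{S}}(S)$. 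Thus $(\mathsf{P}_{\mathbb{S}}(S),d_{\mathbb{S}})$ is a separable and complete pseudometric space (equivalently, its metric identification is the Polish space $({\cal R},d_{\mathbb{S}})$), i.e.\ pseudoPolish.

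The hard part is the middle step: establishing that ${\cal F}$ is closed, which I would do by the explicit computation showing ${\cal F}=\mathbb{R}_{+}$; everything else is bookkeeping once the snowflake-metric identification of the first paragraph is in place. A secondary point deserving care is the precise reading of ``Polish'': the argument above verifies that the \emph{given} metric $d_{\mathbb{S}}$ is complete, not merely that the topology is completely metrizable, and this stronger reading is exactly what makes ${\cal F}=\mathbb{R}_{+}$ (rather than the weaker ``${\cal F}$ is $G_{\delta}$'') the natural thing to prove.
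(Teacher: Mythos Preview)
Your proposal is correct and follows essentially the same strategy as the paper: both reduce everything to the key lemma $\mathcal{F}=\mathbb{R}_{+}$ (the paper's Lemma~3) and then transfer separability and completeness from the real half-line back through $d$; your explicit isometry to $(\mathbb{R}_{+},\rho)$ is simply a cleaner packaging of what the paper carries out by hand. Two minor differences worth noting: the paper realizes surjectivity of $d$ with products of gamma laws rather than two-point laws, and in your scalar computation one gets $d(\mathcal{P})=|\gamma(X_{1})|$ rather than $|\gamma(X_{1})|/2$ (since $d=2\sqrt{\mathbb{C}}=2\cdot|\gamma|/2$), matching the paper's identification with the full Pearson coefficient---but this arithmetic slip does not affect your surjectivity conclusion.
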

Before stating the proof of the theorem, we provide the following Lemma:
\begin{lem}[] The functional $d$ is surjective, i.e., $\mathcal{F}=\mathbb{R}_{+}$.
\end{lem}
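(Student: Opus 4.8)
The plan is to prove that every $\alpha\in\mathbb{R}_{+}$ lies in $\mathcal{F}$ by exhibiting an explicit one-parameter family of generative models $\{\mathcal{P}_p\}_{p\in(0,1)}\subseteq\mathsf{P}_{\mathbb{S}}(S)$ whose risk-margin values $d(\mathcal{P}_p)$ sweep all of $[0,\infty)$. The family I would use consists of \emph{totally unobservable, essentially scalar} models: fix any Borel law for $\boldsymbol{Y}$ on $\mathbb{R}^m$ (a Dirac mass will do) and let $\boldsymbol{X}=X_1\boldsymbol{e}_1$ be independent of $\boldsymbol{Y}$, where $X_1$ is the centered two-point real random variable taking the value $-(1-p)$ with probability $p$ and the value $p$ with probability $1-p$. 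Since $X_1$ is bounded, $\mathbb{E}\{\Vert\boldsymbol{X}\Vert_2^3\,|\,\boldsymbol{Y}\}\in\mathcal{L}_{2|\mathscr{Y}}$, so $\mathcal{P}_p$ is an admissible model; finiteness of $d(\mathcal{P}_p)$ (hence $\mathcal{P}_p\in\mathsf{P}_{\mathbb{S}}(S)$) will follow from the closed-form computation below.

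The second step is to evaluate $d(\mathcal{P}_p)$. Because $\boldsymbol{X}\perp\boldsymbol{Y}$, every conditional moment equals its unconditional counterpart; thus $\cm=\boldsymbol{0}$, $\cc=\mathrm{diag}(\sigma_p^2,0,\dots,0)$ with $\sigma_p^2=\mathrm{Var}(X_1)=p(1-p)>0$, and, from \eqref{adonis}, $\boldsymbol{R}(\boldsymbol{Y})=\mathbb{E}\{X_1^3\}\boldsymbol{e}_1$. Substituting $\mu=0$ in \eqref{eq:RA_MMSE} gives $\xeo=\boldsymbol{0}$, while \eqref{eq6} (taking eigenbasis $\boldsymbol{U}=\boldsymbol{I}$; the residual $\boldsymbol{U}$-block there is zero since $\cm=\boldsymbol{0}$) gives $\xeinf=\tfrac{1}{2\sigma_p^2}\mathbb{E}\{X_1^3\}\boldsymbol{e}_1$. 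Hence $\dx=\xeo-\xeinf=-\tfrac{1}{2\sigma_p^2}\mathbb{E}\{X_1^3\}\boldsymbol{e}_1$, and Theorem \ref{risk-margin} yields the (deterministic) value $\mathbb{C}(\boldsymbol{Y})=\Vert\dx\Vert_{\cc^{\dagger}}^2=\big(\mathbb{E}\{X_1^3\}\big)^2/(4\sigma_p^6)$, so that $d(\mathcal{P}_p)=2\sqrt{\mathbb{E}\{\mathbb{C}(\boldsymbol{Y})\}}=\big|\mathbb{E}\{X_1^3\}\big|/\sigma_p^3$, i.e., the absolute value of Pearson's moment coefficient of skewness of $X_1$ (this also makes explicit the one-dimensional reduction alluded to in Section \ref{section6}). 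In particular $d(\mathcal{P}_p)<\infty$.

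The third step is the sweep. A one-line moment computation gives $\mathbb{E}\{X_1^3\}=p(1-p)(2p-1)$, hence $d(\mathcal{P}_p)=|2p-1|/\sqrt{p(1-p)}$. This is continuous on $(0,1)$, equals $0$ at $p=\tfrac12$ (where $X_1$ is symmetric, so $\dx=\boldsymbol{0}$ and $\mathcal{P}_p$ is skew-symmetric), and tends to $+\infty$ as $p\downarrow 0$. By the intermediate value theorem, its range over $p\in(0,\tfrac12]$ is exactly $[0,\infty)=\mathbb{R}_{+}$; since $d(\mathcal{P}_p)<\infty$ throughout, we also have $\mathcal{P}_p\in\mathsf{P}_{\mathbb{S}}(S)$ for all such $p$. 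Therefore $\mathcal{F}=\mathbb{R}_{+}$, proving surjectivity of $d$.

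No genuine obstacle is expected here: the argument is a construction plus an elementary continuity/intermediate-value step. The only points demanding care are the correct evaluation of $\xeinf$ through the pseudoinverse formula \eqref{eq6} when $\cc$ is rank-deficient (tracking the Moore--Penrose pseudoinverse and ordering the eigenbasis so that the residual $\boldsymbol{U}$-block genuinely vanishes), together with the routine bookkeeping that certifies admissibility and finiteness of $d(\mathcal{P}_p)$.
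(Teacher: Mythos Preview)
Your proof is correct and follows the same overall strategy as the paper: pick a one-parameter family of totally unobservable, essentially scalar states, reduce $d(\mathcal{P})$ to the absolute Pearson skewness of the marginal, and show this sweeps all of $[0,\infty)$. The only difference is the concrete family. The paper uses independent $\mathrm{gamma}(\kappa,\theta)$ coordinates and solves $d(\mathcal{P})=4/\sqrt{\kappa^{3}\theta^{4}}=\alpha$ directly; you use a centered two-point law and invoke the intermediate value theorem on $p\mapsto|2p-1|/\sqrt{p(1-p)}$. Your choice is slightly more elementary (bounded support makes the admissibility check $\mathbb{E}\{\Vert\boldsymbol{X}\Vert_2^3\mid\boldsymbol{Y}\}\in\mathcal{L}_{2|\mathscr{Y}}$ trivial) and, unlike the gamma family, it attains $d=0$ exactly at $p=\tfrac12$ rather than only in a limit; on the other hand, the paper's version gives an explicit inverse for each $\alpha$ without appealing to continuity. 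Either route proves the lemma.
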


\begin{proof}[Proof of Lemma 3]
Consider the simple case of a totally hidden state vector $\boldsymbol{X}$ and suppose that 
\begin{equation}
\mathcal{P}_{(\boldsymbol{X},\boldsymbol{Y})}=\mathcal{P}_{(\boldsymbol{X})}=\mathlarger{\Pi}_{i=1}^{n}\mathcal{P}_{({X_i})}~,
\end{equation}
where $\mathcal{P}_{(X_i)} \sim \mathrm{gamma}(\kappa_i,\theta_i)$. Then 
\begin{align}
d(\mathcal{P_{\boldsymbol{X}}})
&=\Vert\dx(\mathcal{P}_{\boldsymbol{X}})\Vert_{\boldsymbol{\Sigma}^{\dagger}_{\boldsymbol{X}}}\nonumber\\[5pt]
&=2\left\Vert\left[\begin{array}{c}\left[
\mathbb{E}\{X_i \} -\dfrac{\mathbb{E}\{X^{3}_i\}-\mathbb{E}\{X^{2}_i\}\mathbb{E}\{X_i\} \}}{2\sigma^{2}_i}\right]^{1}_{r}\\
\boldsymbol{0}_{n-r}
\end{array}\right] \right\Vert_{\boldsymbol{\Sigma}^{\dagger}_{\boldsymbol{X}}}.\label{katara}
\end{align}
Since the skewness of $\mathcal{P}_{(X_i)}$ is given by $\frac{2}{\sqrt{\kappa_i}}$, \eqref{katara} reads
\begin{equation}
d(\mathcal{P_{\boldsymbol{X}}})=4\sqrt{\frac{1}{\kappa_{i}\sigma^2_{i}}}~.
\end{equation}
As a result, for any $\alpha\geq0$, we can always find $\kappa_{i}>0$ and $\sigma_{i}=\kappa_i\theta^2_i>0$ such that 
\begin{equation}
d(\mathcal{P_{\boldsymbol{X}}})=4\sqrt{\frac{1}{\kappa^3_{i}\theta^4_{i}}}=\alpha.
\end{equation}
In other words, for any $\alpha\geq0$,  $d(\mathcal{P})=\alpha$ has always a solution within $\mathsf{P}_{\mathbb{S}}(S)$ which concludes the proof.
\end{proof}
\begin{proof}[Proof of Theorem 4]
In order to prove separability for $(\mathcal{R},d_{\mathbb{S}})$, take any $\mathcal{P}_{0} \in \mathcal{R}$ and its corresponding image through $d$, $d(\mathcal{P}_0)=\alpha_0$. Since $\mathcal{F}$ is separable, there exist rationals $\{\alpha_{n}\}^{\infty}_{n=1}$ s.t. $\alpha_n\rightarrow\alpha_{0}$. Further,  due to \textit{Lemma 3}, there exists $\{\mathcal{P}_{n}\}^{\infty}_{n=1} \in \mathcal{R}$ with $d(\mathcal{P}_{n})=\alpha_n$. Thus,
\begin{align}
\alpha_{n}\rightarrow \alpha_{0}   &\Rightarrow d(\mathcal{P}_{n})\rightarrow d(\mathcal{P}_0) \nonumber\\[5pt]
&\Rightarrow d(\mathcal{P}_{n})^{2}\rightarrow d(\mathcal{P}_0)^{2}\nonumber\\[5pt]
& \Rightarrow \sqrt{|d(\mathcal{P}_{n})^{2}-d(\mathcal{P}_0)^{2} |} \rightarrow 0 \nonumber\\[5pt]
&\Rightarrow d_{\mathbb{S}}(\mathcal{P}_n,\mathcal{P}_0)\rightarrow 0 \label{tsovolas}
\end{align}
To show that $(\mathcal{R},d_{\mathbb{S}})$ is complete, take $\{\mathcal{P}_{n}\}^{\infty}_{n=1} \in \mathcal{R}$ to be Cauchy. Then, since 
$
(\forall \varepsilon>0) (\exists N=N(\varepsilon)>0)~$ s.t. 
\begin{align}
  n,m>N(\varepsilon)&\Rightarrow d_{\mathbb{S}}(\mathcal{P},\mathcal{P}')<\varepsilon \nonumber\\[5pt]
  &\Rightarrow \sqrt{|d(\mathcal{P}_{n})^{2}-d(\mathcal{P}_0)^{2} |}<\varepsilon \nonumber\\[5pt]
  &\Rightarrow \sqrt{|{\alpha}_{n}^{2}-\alpha_m^{2} |}<\varepsilon~,
\end{align}
 $\{\alpha_{n}\}^{\infty}_{n=1} \in \mathcal{F}$ is Cauchy and hence it converges to some $\alpha_0 \in \mathcal{F}$. Since $d$ is onto, the same process followed in \eqref{tsovolas} yields $\mathcal{P}_{n} \rightarrow \mathcal{P}_0$.
For the metric space $(\mathsf{P}(S),d_{\mathbb{S}})$, separability follows by fixing a collection of representatives $\mathcal{R}$ and subsequently choosing $\mathcal{P}_0\in \mathcal{R}$. Then, a countable basis for $(\mathcal{R},d_{\mathbb{S}})$ also separates $(\mathsf{P}(S),d_{\mathbb{S}})$ since for any $\mathcal{P}^{*}\sim\mathcal{P}_0$ we have 
 \begin{align}
d_{\mathbb{S}}(\mathcal{P}_n,\mathcal{P}^{*}) &\leq d_{\mathbb{S}}(\mathcal{P}_n,\mathcal{P}_{0})+d_{\mathbb{S}}(\mathcal{P}_0,\mathcal{P}^{*})\nonumber\\[5pt]
&\leq d_{\mathbb{S}}(\mathcal{P}_n,\mathcal{P}_{0})+0 \label{diki}
\end{align}
which implies that $\mathcal{P}_{n} \rightarrow \mathcal{P}^{*}$. Lastly, given that $\mathcal{P}_0$ is arbitrary, \eqref{diki} shows completeness for $(\mathsf{P}(S),d_{\mathbb{S}})$ since convergence of a Cauchy sequence in $(\mathcal{R},d_{\mathbb{S}})$ extends to convergence to an equivalence class in $(\mathsf{P}(S),d_{\mathbb{S}})$.\end{proof}

Therefore, there is a standard topological structure induced by $d_{\mathbb{S}}$
(and thus by $d$) on $\mathsf{P}_{\mathbb{S}}(S)$ with a complete
description and favorable properties in terms of separation, closeness
and limit point behavior. 

Under these structural considerations, it is then immediate to observe
that for any given Borel measure ${\cal P}\in\mathsf{P}_{\mathbb{S}}(S)$
we have that $d({\cal P})=|(d(\Pi({\cal P})))^2-0|^{1/2}=d_{\mathbb{S}}(\Pi({\cal P}),{\cal P}_{0})$,
and therefore we may \textit{interpret $d({\cal P})$ as the distance
of ${\cal P}$ relative to all equivalent to each other skew-symmetric
Borel measures on $S$} (i.e., with $\Vert\widehat{\boldsymbol{\Delta X}}\Vert_{2}=0$
almost everywhere), which are precisely the measures for which risk-neutral
and risk-aware estimators, $\hat{\boldsymbol{X}}_{0}$ and $\hat{\boldsymbol{X}}_{\infty}$
respectively, coincide and thus the corresponding $\textrm{mse}$
and $\textrm{sev}$ are simultaneously minimal. 
This fact is significant, not only because it provides a clear topological meaning for the (expected) risk margin analyzed earlier in Section \ref{section4} (Theorem \ref{risk-margin}), but also because $d(\mathcal{P})$ 
induces a similar interpretation to the optimal trade-off $\mathfrak{h}(\mathcal{P})$ via Theorem \ref{Bound}, and consequently completely characterizes the general $\mathrm{mse}$/$\mathrm{sev}$ trade-off of the uncertainty principle of Theorem \ref{UP}.

Simultaneously, both functionals $d$ and $d_\mathbb{S}$ admit a convenient and intuitive \textit{statistical interpretation}, as well. To see this, let us consider the simplest case of a totally hidden, real-valued state variable, say $X$. We have $\mathcal{P}\equiv{\mathcal{P}_{X|\mathbf{0}}}\equiv{\mathcal{P}_X}$, where $\mathbf{0}$ denotes a fictitious trivial observation. Then, denoting the mean and variance of $X$ as $\mu$ and $\sigma^2$, respectively, $d(\mathcal{P})$ may be expressed as
\begin{align}
d({\cal P}) & =2\dfrac{1}{\sigma}\big|\widehat{\boldsymbol{\Delta}X}({\cal P}_{X})\big|\nonumber \\
 & =2\dfrac{1}{\sigma}\bigg|\mathbb{E}\{X\}-\dfrac{\mathbb{E}\{X^{3}\}-\mathbb{E}\{X^{2}\}\mathbb{E}\{X\}}{2\sigma^{2}}\bigg|\nonumber \\
 & =2\dfrac{1}{\sigma}\bigg|\dfrac{2\sigma^{2}\mu-\mathbb{E}\{X^{3}\}+\mathbb{E}\{X^{2}\}\mu}{2\sigma^{2}}\bigg|\nonumber \\
 & =\bigg|\dfrac{2\sigma^{2}\mu-\mathbb{E}\{X^{3}\}+(\mu^{2}+\sigma^{2})\mu}{\sigma^{3}}\bigg|\nonumber \\
 & =\bigg|\dfrac{-\mathbb{E}\{X^{3}\}+\mu^{3}+3\sigma^{2}\mu}{\sigma^{3}}\bigg|,
\end{align}
or, equivalently,
\begin{equation}
    d({\cal P})
    =\bigg|\mathbb{E}\bigg\{\hspace{-2bp}\bigg(\dfrac{X-\mu}{\sigma}\bigg)^{3}\bigg\}\bigg|,
\end{equation}
which is nothing but the \textit{absolute value of Pearson's moment coefficient
of skewness} (i.e., excluding directionality). In other words, Pearson's moment cofficient of skewness may be interpreted itself as the \textit{difference of a pair of optimal estimators; these are the mean of $X$ (in the MMSE sense), and the maximally risk-averse estimator of $X$, optimally biased towards the tail of the distribution $\mathcal{P}_X$}.
Further, via our topological interpretation of $d(\mathcal{P})$, Pearson's moment coefficient of skewness \textit{expresses, in absolute value, the distance (in a topologically consistent sense) of the distribution of $X$ relative to any non-skewed distribution on the real line}, with the most obvious representative being $\mathcal{N}(0,1)$.

\begin{figure}[t!]
\begin{center}
\begin{tabular}{c}
\hspace{-0.8cm}
{\includegraphics[width=0.53\textwidth,height=0.23 \textwidth, trim={0.1cm 0cm 1.6cm 0cm},clip]{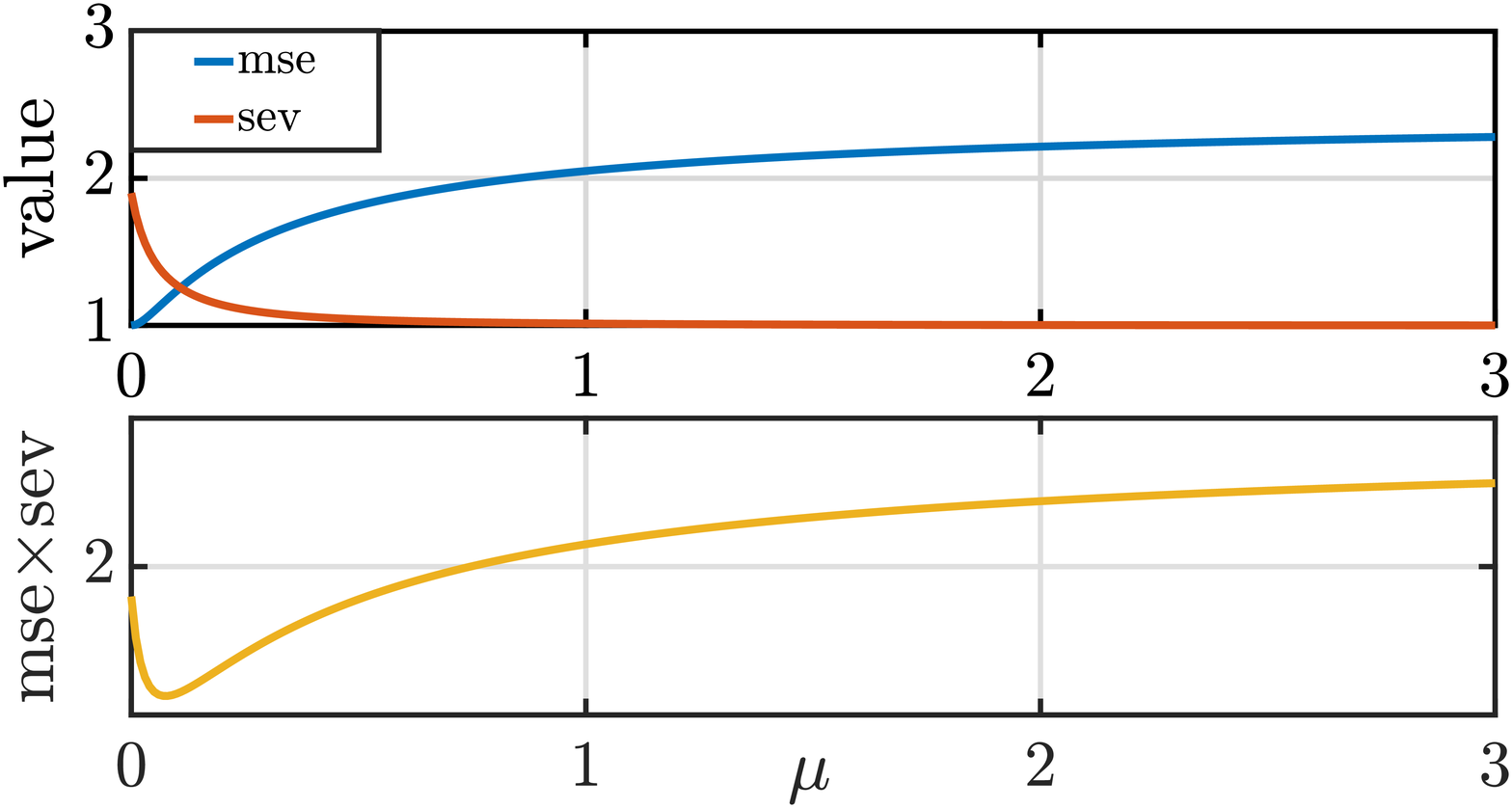}}
\end{tabular}
\end{center}
\vspace{-8pt}
\caption{ Normalized $(\mathrm{mse})/(\mathrm{sev})$ and their corresponding product in case of state-dependent noise.}
\label{fig0}
\end{figure}
\par
Consequently, the risk margin functional $d$ (intuitively a scalar quantity) may be thought as a measure of skewness \textit{magnitude} in multiple dimensions, corresponding to a consistent non-directional generalization of Pearson's moment skewness coefficient, also fully applicable the hidden state model setting, and tacitly exploiting statistical dependencies of both the conditional and marginal measures
${\cal P}_{\boldsymbol{X}|\boldsymbol{Y}}$ and ${\cal P}_{\boldsymbol{Y}}$. In the same fashion, the risk margin (pseudo)metric $d_{\mathbb{S}}$ may be
conventiently thought as a measure of the \textit{relative skewness}
\textit{between (filtered) distributions}.
\par
Of course, skewness directionality, while informative, is a much more complicated concept in multiple dimensions as compared to the case of random variables on the line, where directionality reduces to the sign of a centered third-order moment. Nonetheless, while directionality is naturally not captured by the distance-related functionals $d$ and $d_{\mathbb{S}}$, it is embedded in the random vector $\dx$, and is  \textit{optimally exploited} by the family of optimal risk-aware estimations $\{\hat{\boldsymbol{X}}_{\mu}^{*}\}_{\mu}$, for each observation $\boldsymbol{Y}$.



\section{Numerical Simulations and Discussion}
\label{section7}
Our theoretical claims are now justified through indicative numerical illustrations, along with a relevant discussion. We justify our claims by presenting the following working examples: 
First, we consider the problem of inferring an exponentially distributed hidden state $X$, with $\mathbb{E}\{X\}=2$ while observing $Y=X+v$ \cite{kalogerias2020better}. The random variable $v$ expresses a state-dependent, zero-mean, normally distributed noise, whose (conditional) variance is given by $\mathbb{E}\{v^2|X\}=9X^2$. 
Fig. \ref{fig0} shows $\mathrm{mse}(\xemi)$, $\mathrm{sev}(\xemi)$, as well as their product $\mathrm{mse}(\xemi)\mathrm{sev}(\xemi)$, all with respect to the risk-aversion parameter $\mu$. The former two have been normalized with respect to their corresponding minimum values while the product results after the aforementioned normalization step. From the figure, it is evident that the optimal trade-off (in the sense implied by Theorem \ref{UP}) is attained close to the origin; note, though, that such an optimal $\mu^*$ does \textit{not} correspond to the value of $\mu$ for which (normalized) $\mathrm{mse}$ and $\mathrm{sev}$ curves intersect. 
\par
\begin{figure}[t!]
\begin{center}
\begin{tabular}{c}
\hspace{-0.6cm}
{\includegraphics[width=0.52\textwidth,height=0.23 \textwidth, trim={0.1cm 0cm 1.6cm 0cm},clip]{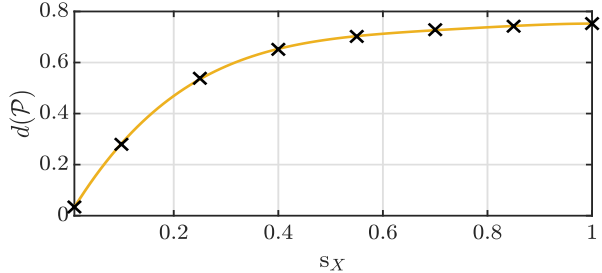}}
\end{tabular}
\end{center}
\vspace{-8pt}
\caption{Measure of skewness as a function of the parameter $\mathrm{s}_{X}$.}
\label{fig5}
\vspace{-4pt}
\end{figure}
Next, we consider the problem of estimating another real-valued hidden state $X$ while observing $Y=X \times W$,
with 
$
    (X,W)\sim \mathrm{Lognormal}(\boldsymbol{0},\boldsymbol{S}),
$
$\boldsymbol{S}=\mathrm{diag}(\mathrm{s}_{X},0.25)$. The variable $\mathrm{s}_{X}>0$ defines a parametric family of probability measures whose skewness increases with $\mathrm{s}_{X}$. 
We would like to examine the impact of our theoretical results by varying the skewness of the aforementioned model. However, we are not aware that by increasing $\mathrm{s}_{X}$ the posterior skewness alters as well. In addition, even if skewness varies with $\mathrm{s}_{X}$, the way it does so is not apparent. For these reasons, we employ our new distance/skewness measure to trial the model with respect to $\mathrm{s}_{X}$. This experiment is shown in Fig. \ref{fig5} where we verify that at least for the examined $\mathrm{s}_{X}$-values, the average posterior skewness increases. 
\par
Fig. \ref{fig3} illustrates how the profiles of $\mathrm{mse}(\xemi)$, $\mathrm{sev}(\xemi)$, and their product $\mathrm{mse}(\xemi)\mathrm{sev}(\xemi)$ scale with $\mathrm{s}_{X}$. As above, we normalize $\mathrm{mse}(\xemi)$ and $\mathrm{mse}(\xemi)\mathrm{sev}(\xemi)$ with respect to their minimum possible value, respectively, and $\mathrm{sev}(\xemi)$ with respect to its maximum one.
First, 
although the average performance deteriorates faster as the skewness increases (e.g., for the most skewed model, depicted in cyan), a $15\%$ deterioration of $\mathrm{mse}$ corresponds to a $20\%$ safety improvement, indicating that, there might be particular models allowing for an even more advantageous exchange. 
\par 
Further,  Fig.~\ref{fig3} shows that,
for the smallest skewness level (blue), almost all risk-aware estimates achieve a near-optimal bound. As the skewness increases, the optimal -with respect to the product- estimators become strongly separated from each other within the class $\{\xemi\}_\mu$. In this one-dimensional example, there is a unique optimal value for $\mu^{\star}$ with respect to the product; however, this might be only an exception to the rule, especially for higher-dimensional models. Note that a graphical representation of the product like the one depicted in Fig.~\ref{fig3} is all that we need to do to at least approximately determine the optimal value for $\mu$ (a single parameter).
\par
Lastly, Fig.~\ref{fig4} presents the course of the upper bound $\mathbb{U}(\mathcal{P})$ with respect to the skewness parameter $\mathrm{s}_{X}$. To clarify its behavior close to zero, we sample $\mathrm{s}_{X}$ additionally at $0.01$ and $0.1$. Expectedly, while $d(\mathcal{P})$ approaches zero, the bound approaches zero as well regardless of the chosen limit $\rho_{\max}$, and the values $\mathrm{mse}(\xeo)$, and $\mathrm{sev}(\xeinf)$.

\begin{figure}[t!]
\begin{adjustwidth}{3cm}{0cm}
\hspace{8.5bp}
\begin{tabular}{c}
{\includegraphics[width=0.53\textwidth,height=0.23 \textwidth, trim={0.1cm 0cm 1.6cm 0cm},clip]{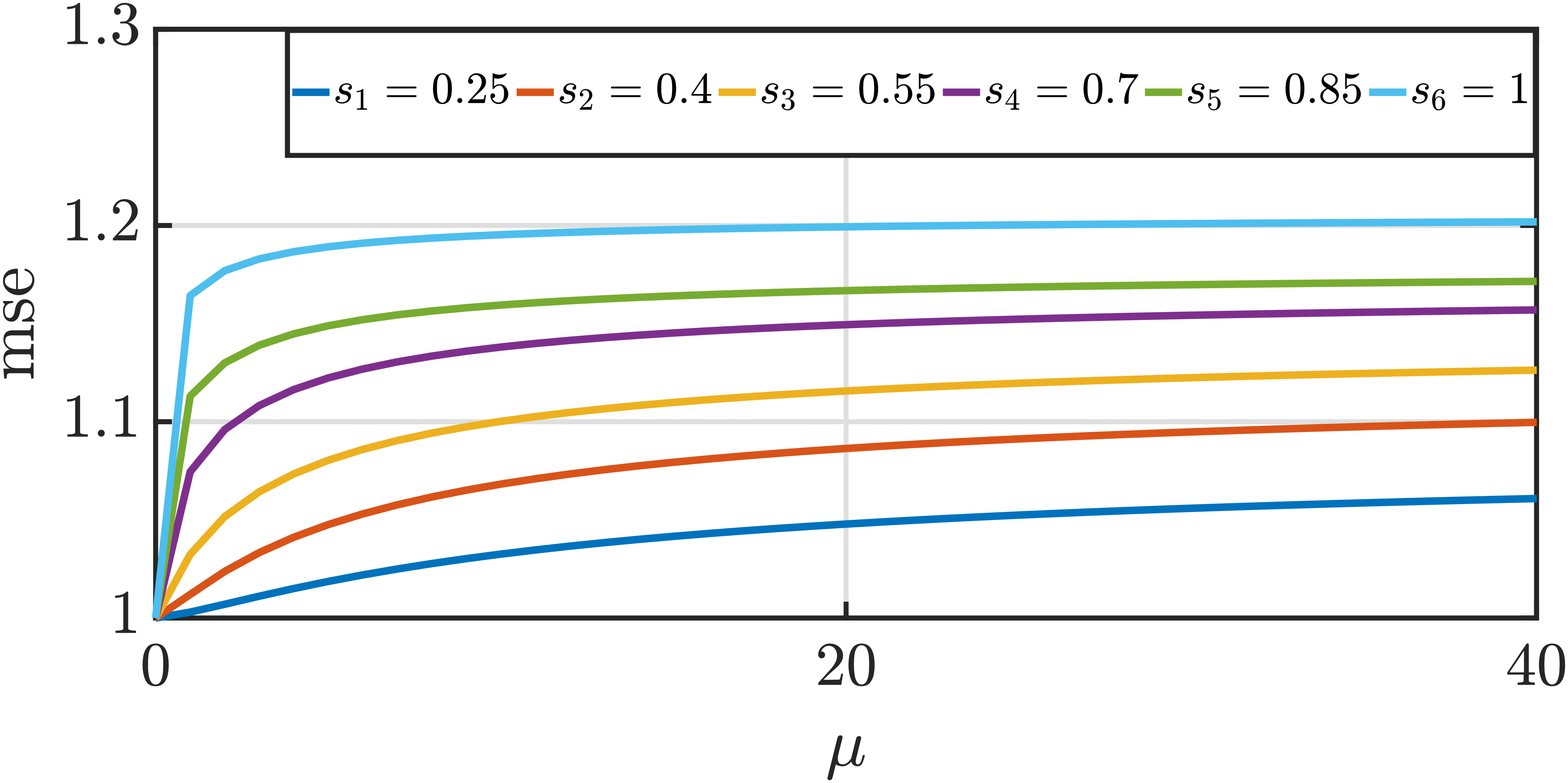}}
\vspace{-23bp}
\\
{\includegraphics[width=0.53\textwidth,height=0.23 \textwidth, trim={0.1cm 0cm 1.6cm 0cm},clip]{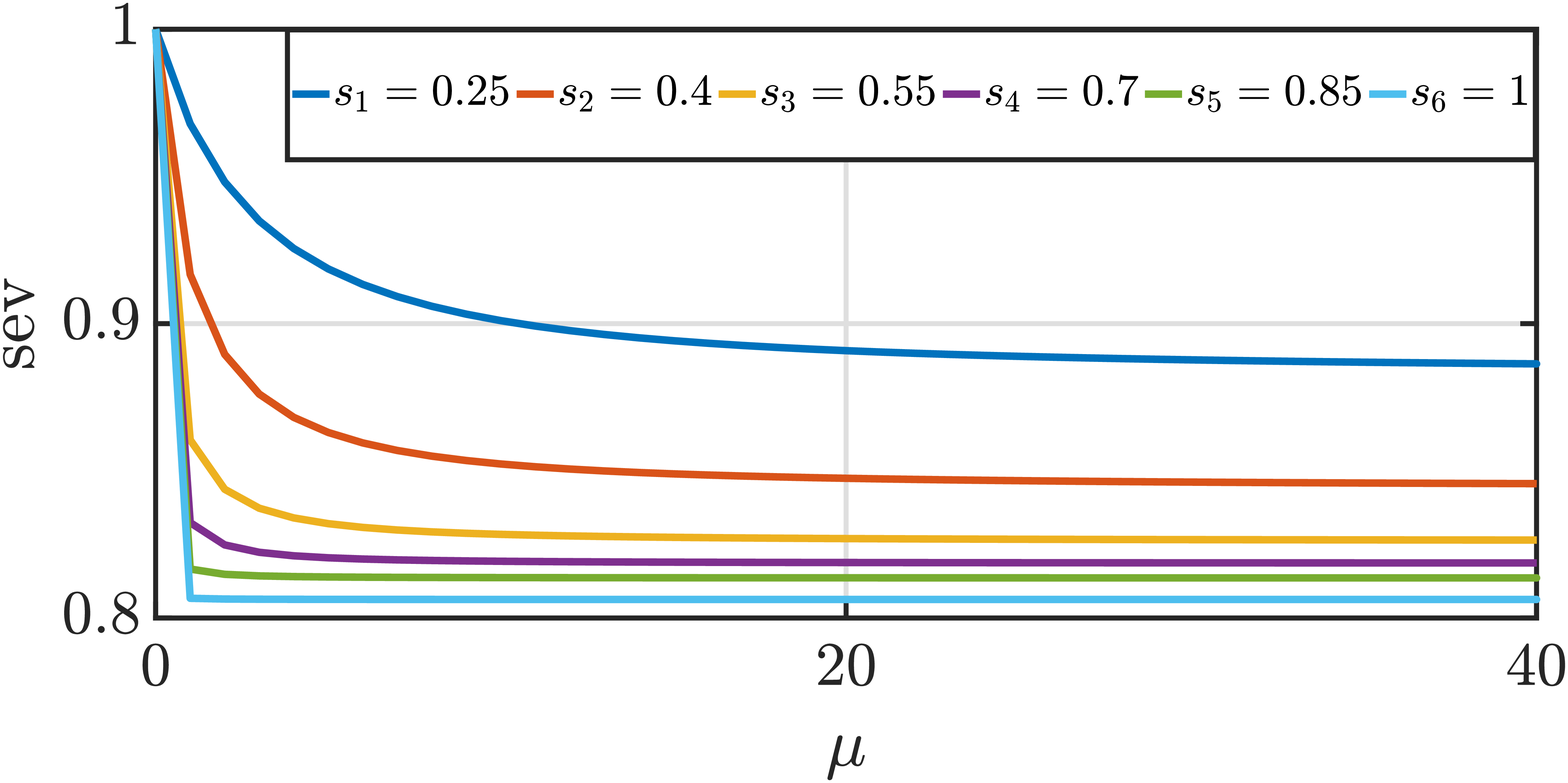}}
\vspace{-23bp}
\\
\hspace{-6.6bp}
{\includegraphics[width=0.53\textwidth,height=0.23 \textwidth, trim={0.1cm 0cm 1.6cm 0cm},clip]{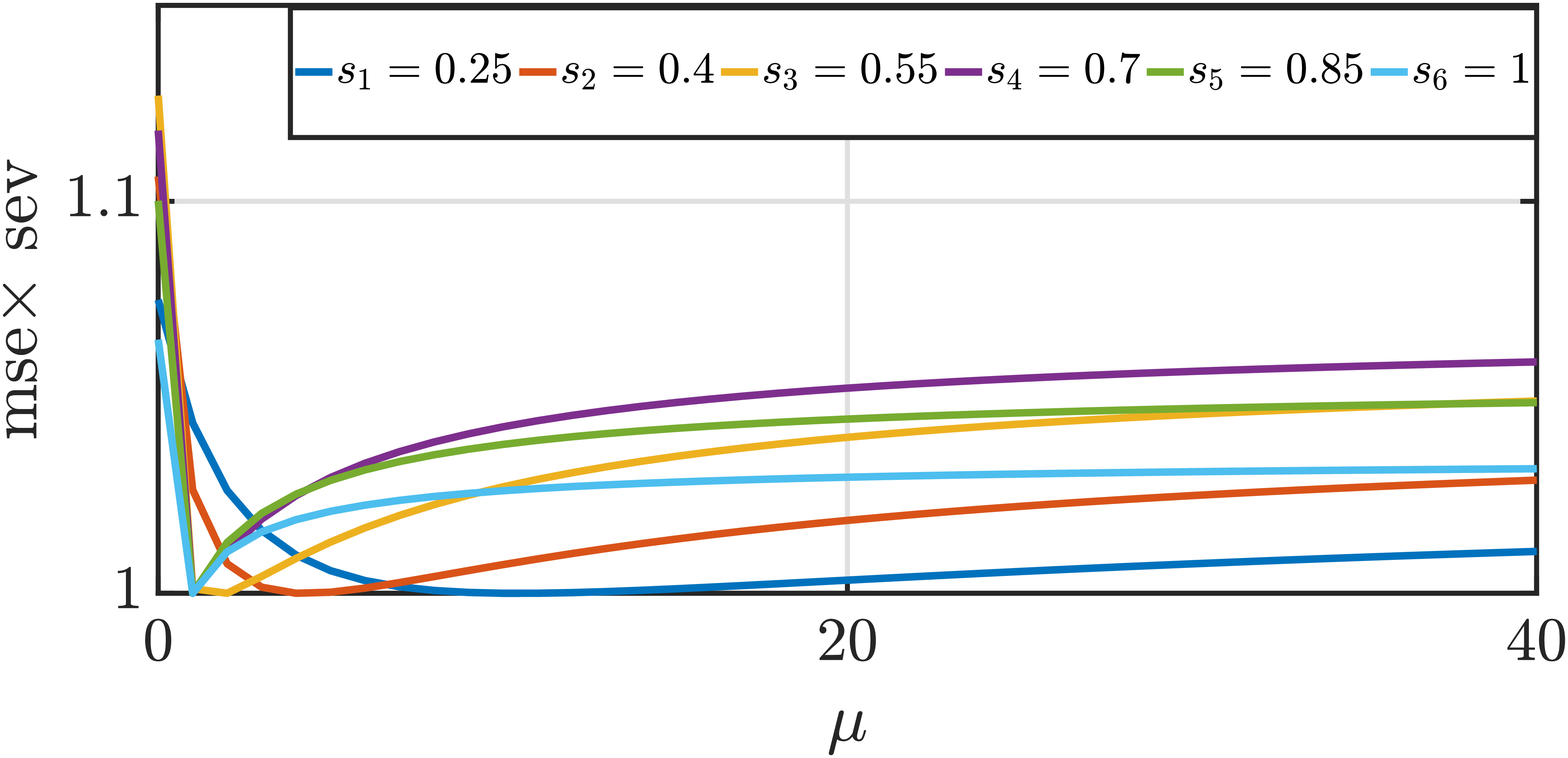}}
\end{tabular}
\end{adjustwidth}
\vspace{-4pt}
\caption{Up (Center): $\mathrm{mse}$ ($\mathrm{sev}$) percent increase (decrease) relative to risk-aversion parameter $\mu$ for different skewness levels. Down: Normalized trade-off relative to risk-aversion parameter $\mu$ for different skewness levels.}
\label{fig3}
\vspace{-4pt}
\end{figure}
\section{Conclusion}
This work quantified the inherent trade-off between $\mathrm{mse}$ and $\mathrm{sev}$ by lower bounding the product between the two over all admissible estimators. Provided a level of performance (resp. risk), the introduced uncertainty relation reveals the minimum risk (resp. performance) tolerance for the problem and assesses how effective any estimator is with respect to the optimal Bayesian trade-off.
Projecting the risk-averse stochastic $\mu$-parameterized curve on the link between the MMSE and the maximally risk-averse estimator, we defined as analyzed the so-called hedgeable risk margin of the model. Its significance stems from the fact that it admits both a rigorous topological and an intuitive statistical interpretations, fitting our risk-aware estimation setting. In particular, the risk margin functional induces a new measures of the skewness of the conditional evidence regarding the state provided the observables. 
Connecting the dots, we showed that the optimal trade-off is order-equivalent to this new measure of skewness, thus fully characterizing our uncertainty principle from a statistical perspective.
\begin{figure}[t!]
\begin{center}
\begin{tabular}{c}
\hspace{-0.6cm}
{\includegraphics[width=0.52\textwidth,height=0.23 \textwidth, trim={0.1cm 0cm 1.6cm 0cm},clip]{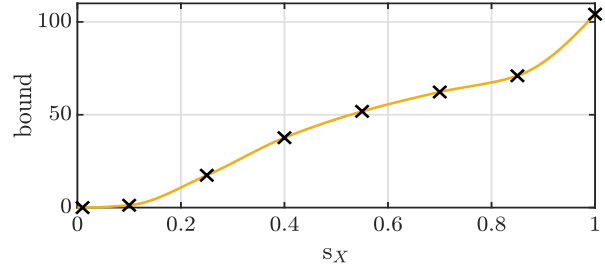}}
\end{tabular}
\end{center}
\vspace{-8pt}
\caption{The bound $\mathbb{U}(\mathcal{P})$ as a function of the parameter $\mathrm{s}_{X}$. In this example $\rho_{\max}=10$. }
\label{fig4}
\vspace{-4pt}
\end{figure}
\section{Appendix}

\subsection{Maximally Risk-Averse MMSE estimator}
First, recall that 
\begin{align}
\boldsymbol{\Sigma}_{\boldsymbol{X} \mid \boldsymbol{Y}} \triangleq \mathbb{E}\left\{(\boldsymbol{X}-\mathbb{E}\{\boldsymbol{X} \mid \boldsymbol{Y}\})(\boldsymbol{X}-\mathbb{E}\{\boldsymbol{X} \mid \boldsymbol{Y}\})^{\top} \mid \boldsymbol{Y}\right\} =\boldsymbol{U}\boldsymbol{\Lambda} \boldsymbol{U}^{\top},
\end{align}
which implies 
\begin{align}
 \boldsymbol{U}^{\top} \boldsymbol{\Sigma}_{\boldsymbol{X} \mid \boldsymbol{Y}}\boldsymbol{U}&=\mathbb{E}\left\{(\boldsymbol{U}^{\top}\boldsymbol{X}-\boldsymbol{U}^{\top}\mathbb{E}\{\boldsymbol{X} \mid \boldsymbol{Y}\})(\boldsymbol{U}^{\top}\boldsymbol{X}-\boldsymbol{U}^{\top}\mathbb{E}\{\boldsymbol{X} \mid \boldsymbol{Y}\})^{\top} \mid \boldsymbol{Y}\right\}\nonumber\\
 &=\boldsymbol{\Lambda}, \label{kyriakos }
\end{align}
where we assume that $\boldsymbol{\Lambda}$ conserves only $r$ non-zero eigenvalues, i.e,

\begin{align}
\boldsymbol{\Lambda}=\mathrm{diag}\big( \big\{ \sigma_{i}(\boldsymbol{Y})\big\}_{i\in\mathbb{N}^{+}_r},\boldsymbol{0}\big).\label{mitsotakis}
\end{align}
From \eqref{kyriakos } and \eqref{mitsotakis} we may infer that 
\begin{align}
    [\boldsymbol{U}^{\top}(\boldsymbol{X}-\mathbb{E}\{\boldsymbol{X}|\boldsymbol{Y}\})]_{i}^2=0~~,i=r+1,...,n,
\end{align}
or equivalently that 
\begin{align}
    [\boldsymbol{U}^{\top}\boldsymbol{X}]_{i}=[\boldsymbol{U}^{\top}\mathbb{E}\{\boldsymbol{X}|\boldsymbol{Y}\}]_{i}~~, i=r+1,...,n. \label{evdomindadyo}
\end{align}
Recalling \eqref{adonis}, \eqref{eq:RA_MMSE} we have 
\begin{align}
&\xemi\nonumber\\
&=(\boldsymbol{I}+2\mu\boldsymbol{\Sigma}_{\boldsymbol{X} \mid \boldsymbol{Y}})^{-1}[ \mathbb{E}\{\boldsymbol{X} \mid \boldsymbol{Y}\}+\mu\big(\mathbb{E}\{\|\boldsymbol{X}\|_{2}^{2} \boldsymbol{X} \mid \boldsymbol{Y}\}-\mathbb{E}\{\|\boldsymbol{X}\|_{2}^{2} \mid \boldsymbol{Y}\} \mathbb{E}\{\boldsymbol{X} \mid \boldsymbol{Y}\}\big)]\nonumber\\[5pt]
&=\boldsymbol{U}\Big(\mathbb{I}+2\mu\boldsymbol{\Lambda} \Big)^{-1}
\left\{ 
\left[\begin{array}{c}\hspace{-0.4cm}\left[
\mathbb{E}\{[\boldsymbol{U}^{\top}\boldsymbol{X}]_{i}|\boldsymbol{Y} \}\right]^{1}_{r}\\[\medskipamount]
\left[
\mathbb{E}\{[\boldsymbol{U}^{\top}\boldsymbol{X}]_{i}|\boldsymbol{Y} \}\right]^{r+1}_{n}
\end{array}\hspace{-0.1cm}\right]+\mu
\left[\begin{array}{c}\hspace{-0.4cm}\left[
\mathbb{E}\{||\boldsymbol{X}||^2_2[\boldsymbol{U}^{\top}\boldsymbol{X}]_{i}|\boldsymbol{Y} \}-\mathbb{E}\{||\boldsymbol{X}||^2_2|\boldsymbol{Y}\}[\boldsymbol{U}^{\top}\mathbb{E}\{\boldsymbol{X}|\boldsymbol{Y}\}]_{i} \}
\right]^{1}_{r}\\[\medskipamount]
\hspace{-0.1cm}\left[\mathbb{E}\{||\boldsymbol{X}||^2_2|\boldsymbol{Y}\}[\boldsymbol{U}^{\top}\boldsymbol{X}]_{i}-\mathbb{E}\{||\boldsymbol{X}||^2_2|\boldsymbol{Y}\}[\boldsymbol{U}^{\top}\mathbb{E}\{\boldsymbol{X}|\boldsymbol{Y}\}]_{i}
\right]^{r+1}_{n}
\end{array}\hspace{-0.1cm}\right]
\right\} \nonumber \\[5pt]
&=\boldsymbol{U}\Big(\mathbb{I}+2\mu\boldsymbol{\Lambda} \Big)^{-1}\left\{ \left[\begin{array}{c}\hspace{-0.4cm}\left[
\mathbb{E}\{[\boldsymbol{U}^{\top}\boldsymbol{X}]_{i}|\boldsymbol{Y} \}\right]^{1}_{r}\\[\medskipamount]
\left[
\mathbb{E}\{[\boldsymbol{U}^{\top}\boldsymbol{X}]_{i}|\boldsymbol{Y} \}\right]^{r+1}_{n}
\end{array}\hspace{-0.1cm}\right]+\mu
\left[\begin{array}{c}\hspace{-0.1cm}\left[
\mathbb{E}\{||\boldsymbol{X}||^2_2[\boldsymbol{U}^{\top}\boldsymbol{X}]_{i}|\boldsymbol{Y} \}-\mathbb{E}\{||\boldsymbol{X}||^2_2|\boldsymbol{Y}\}[\boldsymbol{U}^{\top}\mathbb{E}\{\boldsymbol{X}|\boldsymbol{Y}\}]_{i} \}
\right]^{1}_{r}\\[\medskipamount]
\hspace{-0.1cm}\left[[0]_{i}
\right]^{r+1}_{n}
\end{array}\hspace{-0.1cm}\right]
\right\} \label{skatastaarnia}
\end{align}
where the latter holds since \eqref{evdomindadyo}, indicates that 
\begin{align}
  \mathbb{E}\{\|\boldsymbol{X}\|_{2}^{2}[\boldsymbol{U}^{\top} \boldsymbol{X}]_{i} \mid \boldsymbol{Y}\}-\mathbb{E}\{\|\boldsymbol{X}\|_{2}^{2} \mid \boldsymbol{Y}\}[\boldsymbol{U}^{\top} \mathbb{E}\{\boldsymbol{X} \mid \boldsymbol{Y}\}]_{i}&\nonumber\\[5pt]
   & \hspace{-5cm}=\mathbb{E}\{\|\boldsymbol{X}\|_{2}^{2}\mid \boldsymbol{Y}\}[\boldsymbol{U}^{\top} \boldsymbol{X}]_{i}-\mathbb{E}\{\|\boldsymbol{X}\|_{2}^{2} \mid \boldsymbol{Y}\}[\boldsymbol{U}^{\top} \mathbb{E}\{\boldsymbol{X} \mid \boldsymbol{Y}\}]_{i}\nonumber\\[5pt]
   & \hspace{-5cm}=\mathbb{E}\{\|\boldsymbol{X}\|_{2}^{2}\mid \boldsymbol{Y}\}[\boldsymbol{U}^{\top} \mathbb{E}\{\boldsymbol{X} \mid \boldsymbol{Y}\}]_{i}-\mathbb{E}\{\|\boldsymbol{X}\|_{2}^{2} \mid \boldsymbol{Y}\}[\boldsymbol{U}^{\top} \mathbb{E}\{\boldsymbol{X} \mid \boldsymbol{Y}\}]_{i}\nonumber\\[5pt]
   &\hspace{-5cm}=0~,~~i=r+1,...,n.
\end{align}
Further, notice that from  \eqref{mitsotakis} and \eqref{skatastaarnia} we obtain 
\begin{align}
\lim_{\mu\rightarrow\infty}\boldsymbol{U}\Big(\mathbb{I}+2\mu\boldsymbol{\Lambda} \Big)^{-1}
\left[\begin{array}{c}\hspace{-0.4cm}\left[
\mathbb{E}\{[\boldsymbol{U}^{\top}\boldsymbol{X}]_{i}|\boldsymbol{Y} \}\right]^{1}_{r}\\[\medskipamount]
\left[
\mathbb{E}\{[\boldsymbol{U}^{\top}\boldsymbol{X}]_{i}|\boldsymbol{Y} \}\right]^{r+1}_{n}
\end{array}\hspace{-0.1cm}\right]
=\boldsymbol{U}\left[\begin{array}{c}\mathbf{0}_{r} \\ {\left[\boldsymbol{U}^{\top} \mathbb{E}\{\boldsymbol{X} \mid \boldsymbol{Y}\}\right]_{n}^{r+1}}\end{array}\right]
\end{align}
and 
\begin{align}
&\lim_{\mu\rightarrow\infty}\boldsymbol{U}\Big(\mathbb{I}+2\mu\boldsymbol{\Lambda} \Big)^{-1}\mu \left[\begin{array}{c}\hspace{-0.1cm}\left[
\mathbb{E}\{||\boldsymbol{X}||^2_2[\boldsymbol{U}^{\top}\boldsymbol{X}]_{i}|\boldsymbol{Y} \}-\mathbb{E}\{||\boldsymbol{X}||^2_2|\boldsymbol{Y}\}[\boldsymbol{U}^{\top}\mathbb{E}\{\boldsymbol{X}|\boldsymbol{Y}\}]_{i} \}
\right]^{1}_{r}\\[\medskipamount]
\hspace{-0.1cm}\left[[0]_{i}
\right]^{r+1}_{n}
\end{array}\hspace{-0.1cm}\right]
\nonumber\\[5pt]
&=\left[\begin{array}{c}
\frac{1}{2\sigma_{1}(\boldsymbol{Y})}\Big(\mathbb{E}\{||\boldsymbol{X}||^2_2[\boldsymbol{U}^{\top}\boldsymbol{X}]_{1}|\boldsymbol{Y} \}-\mathbb{E}\{||\boldsymbol{X}||^2_2|\boldsymbol{Y}\}[\boldsymbol{U}^{\top}\mathbb{E}\{\boldsymbol{X}|\boldsymbol{Y}\}]_{1} \}\Big)\\
\vdots\\
\frac{1}{2\sigma_{r}(\boldsymbol{Y})}\Big(\mathbb{E}\{||\boldsymbol{X}||^2_2[\boldsymbol{U}^{\top}\boldsymbol{X}]_{r}|\boldsymbol{Y} \}-\mathbb{E}\{||\boldsymbol{X}||^2_2|\boldsymbol{Y}\}[\boldsymbol{U}^{\top}\mathbb{E}\{\boldsymbol{X}|\boldsymbol{Y}\}]_{r} \}\Big)\\
\vdots\\
{0}
\end{array}\right]\nonumber\\
&=\frac{1}{2}\boldsymbol{\Sigma}_{\boldsymbol{X}|\boldsymbol{Y}}^{\dagger}\left(\mathbb{E}\left\{\|\boldsymbol{X}\|_{2}^{2} \boldsymbol{X} \mid \boldsymbol{Y}\right\}-\mathbb{E}\left\{\|\boldsymbol{X}\|_{2}^{2} \mid \boldsymbol{Y}\right\} \mathbb{E}\{\boldsymbol{X} \mid \boldsymbol{Y}\}\right)\nonumber\\[5pt]
&=\frac{1}{2}\boldsymbol{\Sigma}_{\boldsymbol{X}|\boldsymbol{Y}}^{\dagger}\boldsymbol{R}(\boldsymbol{Y})
\end{align}
which yields 
\begin{equation}
\hat{\boldsymbol{X}}_{\infty}^{*}(\boldsymbol{Y})=\dfrac{1}{2}\boldsymbol{\Sigma}_{\boldsymbol{X}|\boldsymbol{Y}}^{\dagger}\boldsymbol{R}(\boldsymbol{Y})+
\boldsymbol{U}\begin{bmatrix}{\bf 0}_{r}\\
\big[\boldsymbol{U}^{\top}\mathbb{E}\big\{\hspace{-1pt}\boldsymbol{X}|\boldsymbol{Y}\big\}\big]_{n}^{r+1}
\end{bmatrix},
\end{equation}
and concludes the proof \hfill $\square$

\subsection{Proof of lemma 1}

Consider $\hat{\boldsymbol{X}}_{\mu}^{*}$ and $\hat{\boldsymbol{X}}^{*}_{\mu'}$ being minimizers of \eqref{eq:Lagrangian} with $\mu \neq \mu'$. Then we may write: 
\begin{align}
&\mathrm{mse}(\hat{\boldsymbol{X}}^{*}_{\mu'})+\mu'\,\mathrm{sev}(\hat{\boldsymbol{X}}^{*}_{\mu'}) \leq \mathrm{mse}(\hat{\boldsymbol{X}}^{*}_{\mu})+\mu'\,\mathrm{sev}(\hat{\boldsymbol{X}}^{*}_{\mu})~,\label{min1}
\end{align}

and
\begin{align}
&\mathrm{mse}(\hat{\boldsymbol{X}}^{*}_{\mu})+\mu\,\mathrm{sev}(\hat{\boldsymbol{X}}^{*}_{\mu}) \leq \mathrm{mse}(\hat{\boldsymbol{X}}^{*}_{\mu'})+\mu\,\mathrm{sev}(\hat{\boldsymbol{X}}^{*}_{\mu'})~.\label{min2}
\end{align}
By adding \eqref{min1} and \eqref{min2} we obtain:
\begin{align}
(\mu-\mu')\mathrm{sev}(\hat{\boldsymbol{X}}^{*}_{\mu}) \leq (\mu-\mu')\mathrm{sev}(\hat{\boldsymbol{X}}^{*}_{\mu'}) ~,
\end{align}
which shows that $\mathrm{sev}(\hat{\boldsymbol{X}}_{\mu})$ decreases w.r.t.\ $\mu$. Furthermore, from either \eqref{min1} or \eqref{min2} we obtain that  $\mathrm{mse}(\hat{\boldsymbol{X}}_{\mu})$ is increasing. For example, by assuming $\mu - \mu'>0$, \eqref{min1} reads: 
\begin{equation}\label{eq13}
\begin{aligned}
&\mathrm{mse}(\hat{\boldsymbol{X}}^{*}_{\mu'})- \mathrm{mse}(\hat{\boldsymbol{X}}^{*}_{\mu})\leq \mu'(\mathrm{sev}(\hat{\boldsymbol{X}}^{*}_{\mu})-\mathrm{sev}(\hat{\boldsymbol{X}}^{*}_{\mu'}))\leq0
\end{aligned}
\end{equation}
\hfill $\square$
\subsection{Proof of lemma 2}
To begin with let us recall that 
\begin{align}
\frac{d\xemi(\boldsymbol{Y})}{d\mu}=2\boldsymbol{U}\boldsymbol{\Lambda}(\boldsymbol{Y})^2\boldsymbol{D}_{\cc}\boldsymbol{U}^{\top}\dx. \label{dif}
\end{align}
By integrating \eqref{dif} in $(\mu,\mu')$ we obtain 
\begin{align}
\xemi-\xemit=(\mu-\mu')\boldsymbol{U}\boldsymbol{H}(\mu,\mu')\boldsymbol{U}^{\top}\dx~,\label{difference}
\end{align}
where 
\begin{align}
 &\hspace{-4bp}\boldsymbol{H}(\mu,\mu')  
=\mathrm{diag}\Bigg( \bigg\{\frac{2\siy}{\big(1+2\mu\siy\big)\big(1+2\mu'\siy\big)}\bigg\}_{i\in{\mathbb{N}^+_r}},\boldsymbol{0}\Bigg). \label{matrixh}
\end{align}
To show Lipschitz for $\mathrm{mse}(\xemi)$ in $[0,+\infty)$, consider the absolute difference 
\begin{align}
|\mathrm{mse}(\xemi)-\mathrm{mse}(\xemit)|=\big|\mathbb{E}\big\{\|\xemi\|_{2}^{2} -2{\hat{\boldsymbol{X}}_{0}^{* \top}}\xemi- \|\xemit\|_{2}^{2} +2{\hat{\boldsymbol{X}}_{0}^{* \top}}\xemit \big\}\big|,
\end{align}
and subsequently add and subtract ${\hat{\boldsymbol{X}}_{{\mu}^{ \prime}}^{* \top}}\xemi$ within the expectation to obtain
\begin{align}
|\mathrm{mse}(\xemi)-\mathrm{mse}(\xemit)|=\big|\mathbb{E}\big\{(\xemi-\xeo+\xemit-\xeo)^{\top}(\xemi-\xemit)\big\}\big|.
\end{align}
Thus, by employing \eqref{difference} and \eqref{matrixh} we may write 
\begin{align}
|\mathrm{mse}(\xemi)-\mathrm{mse}(\xemit)|&=\big|\mathbb{E}\big\{(\xemi-\xeo+\xemit-\xeo)^{\top}(\xemi-\xemit)\big\}\big|\nonumber\\[5pt]
&=4|\mu-\mu'|~\Big|\mathbb{E}\Big\{\dx^{\top}\boldsymbol{U}\Big(\frac{\mu}{2}\boldsymbol{H}(\mu,0)+\frac{\mu'}{2}\boldsymbol{H}(\mu',0) \Big) \boldsymbol{H}(\mu,\mu')\boldsymbol{U}^{\top}\dx \Big\}\Big|.\label{treno}
\end{align}
For $\kappa\in\{\mu, \mu'\}$ we have 
\begin{align}
\big[\boldsymbol{U}^{\top}\dx\big]^{\top}\frac{\kappa}{2}\boldsymbol{H}(\kappa,0)\boldsymbol{H}(\mu,\mu')\big[\boldsymbol{U}^{\top}\dx\big]
&\leq \max\Bigg\{\Bigg\{[\frac{\kappa}{2}\boldsymbol{H}(\kappa,0)\boldsymbol{H}(\mu,\mu')]_{i,i}\Bigg\}_{1 \leq i \leq r},0 \Bigg\}||\dx||^2_2 \nonumber\\[5pt]
&=\max\Bigg\{[\frac{\kappa}{2}\boldsymbol{H}(\kappa,0)\boldsymbol{H}(\mu,\mu')]_{i,i}\Bigg\}_{1 \leq i \leq r}||\dx||^2_2~.\label{paparia}
\end{align}
Further, since 
\begin{align}
 [\boldsymbol{H}(\kappa,0)]_{i,i}&=\frac{\kappa\siy}{1+2\kappa\siy}\nonumber\\[5pt]
 &<\frac{1}{2}~,
\end{align} 
and
\begin{align}
 [\boldsymbol{H}(\mu,\mu')]_{i,i}&=\frac{2\siy}{(1+2\mu\siy)(1+2\mu'\siy)}\nonumber\\[5pt]
 &<2\siy~,
\end{align}
\eqref{treno}, and \eqref{paparia} yield
\begin{align}
\big|\mathrm{mse}(\hat{\boldsymbol{X}}^{*}_\mu)-\mathrm{mse}(\hat{\boldsymbol{X}}^{*}_{\mu'})\big|& \leq4|\mu-\mu'|~\mathbb{E}\Big\{\max_{1 \leq i \leq r}\siy\big\|\dx \big\|^2_2 \Big\}\nonumber\\[4pt]
&=4|\mu-\mu'|~\mathbb{E}\big\{\smaxy\big\|\dx\big\|^2_2\big\}
\end{align}
Continuity of $\mathrm{mse}(\xemi)$ to $+\infty$ follows after applying $\mu=+\infty$ in \eqref{dif}, and from the fact that 
\begin{align}
\big|\mathrm{mse}(\hat{\boldsymbol{X}}^{*}_\mu)-\mathrm{mse}(\hat{\boldsymbol{X}}^{*}_{\infty})\big|&=|\mathbb{E}\{ (\xemi+\xeinf-2\xeo)^{\top}(\xemi-\xeinf)\}|\nonumber\\[5pt]
&=\big|\mathbb{E}\{ (\xemi-\xeo-(\xeo-\xeinf)) (\xemi-\xeinf ) \}\big|\nonumber\\[5pt]
&=\Bigg|\mathbb{E}\{\dx^{\top}\boldsymbol{U}
\mathrm{diag}\Bigg(\Big\{ \frac{1}{\big(1+2\mu\siy\big)^2}\Big\}_{i\in{\mathbb{N}^+_r}},
\boldsymbol{0}\Bigg)
\boldsymbol{U}^{\top}\dx \}\Bigg| \nonumber\\[5pt]
&\leq\mathbb{E}\Bigg\{ \max_{1 \leq i \leq r}\Bigg\{\frac{1}{(1+2\mu\siy)^2} \Bigg\} ||\dx||^2_2 \Bigg\} \nonumber\\[5pt]
&\leq \frac{1}{4\mu^2}\mathbb{E}\Bigg\{ \frac{||\dx||^2_2}{{\sminy}^2} \Bigg\}
\end{align}
In a similar fashion, we may show Lipschitz for $\mathrm{sev}(\xemi)$ in $[0,+\infty)$ by considering the absolute difference 

\begin{align}
|\mathrm{sev}(\xemi)-\mathrm{sev}(\xemit)|
&=
\mathbb{E}\big\{ \|\xemi\|_{\cc}^{2} -2{\hat{\boldsymbol{X}}_{\infty}^{* \top}}{\cc}{\xemi}-  \|\xemit\|_{\cc}^{2}+2{\hat{\boldsymbol{X}}_{\infty}^{* \top}}{\cc}{\xemit}\big\},
\end{align}
which after adding and subtracting ${\hat{\boldsymbol{X}}_{{\mu}^{ \prime}}^{* \top}}\cc\xemi$ within the expectation reads 
\begin{align}
&\hspace{-0.1cm}|\mathrm{sev}(\hat{\boldsymbol{X}}_\mu)-\mathrm{sev}(\hat{\boldsymbol{X}}_{\mu'})|=\big|\mathbb{E}\{ (\xemi-\xeinf+\xemit-\xeinf)^{\top}\cc(\xemi-\xemit)\}\big|. \label{mlkia}
\end{align}
After applying \eqref{difference} for the appearing differences and subsequently declaring
\begin{align}
\boldsymbol{d}(\mu)=
\mathrm{diag}\Bigg(
\Bigg\{
\frac{1}{1+2\mu\siy}
\Bigg\}_{i\in{\mathbb{N}^+_r}},\boldsymbol{0}
\Bigg),
\end{align}
and 
\begin{align}
\boldsymbol{D}(\mu,\mu')=
\mathrm{diag}\Bigg(\Bigg\{\frac{2\siy^2}{(1+2\mu\siy)(1+2\mu'\siy)}\Bigg\}_{i\in{\mathbb{N}^+_r}},\boldsymbol{0}\Bigg),
\end{align}
equation \eqref{mlkia} yields
\begin{align}
|\mathrm{sev}(\hat{\boldsymbol{X}}_\mu)-\mathrm{sev}(\hat{\boldsymbol{X}}_{\mu'})|=|\mu-\mu^{'}|~\Big|\mathbb{E}\big\{\big[ \boldsymbol{U}^{\top}\dx\big]^{\top} \big(\boldsymbol{d}(\mu)  +\boldsymbol{d}(\mu') \big)
\boldsymbol{D}(\mu,\mu')\big[ \boldsymbol{U}^{\top}\dx\big]\big\}\Big|.\label{enatsigaraki}
\end{align}
For $\kappa\in\{\mu, \mu'\}$ we may write  
\begin{align}
[\boldsymbol{U}^{\top}\dx]^{\top}\boldsymbol{d}(\kappa)\boldsymbol{D}(\mu,\mu')[\boldsymbol{U}^{\top}\dx]&\leq \max\Bigg\{\Bigg\{[\boldsymbol{d}(\kappa)\boldsymbol{D}(\mu,\mu')]_{i,i}\Bigg\}_{1 \leq i \leq r},0 \Bigg\}||\dx||^2_2 \nonumber\\
&=\max\Bigg\{[\boldsymbol{d}(\kappa)\boldsymbol{D}(\mu,\mu')]_{i,i}\Bigg\}_{1 \leq i \leq r}||\dx||^2_2\nonumber\\
&\leq\max_{i \leq i \leq r}\{\siy^2\}||\dx||^2_2~, \label{paparia2}
\end{align}
for each term inside the expectation. Therefore \eqref{enatsigaraki}, and \eqref{paparia2} yield 
\begin{align}
|\mathrm{sev}(\hat{\boldsymbol{X}}_\mu)-\mathrm{sev}(\hat{\boldsymbol{X}}_{\mu'})|\leq4|\mu-\mu'|~\mathbb{E}\Big\{\smaxy^{2}\big\|\dx\big\|^2_2\Big\}
\end{align}
Lastly, continuity of $\mathrm{sev}(\xemi)$ to $+\infty$ results from the fact that 
\begin{align}
|\mathrm{sev}(\hat{\boldsymbol{X}}^{*}_\mu)-\mathrm{sev}(\hat{\boldsymbol{X}}^{*}_{\infty})|&=\big|\mathbb{E}\{ (\xemi-\xeinf)^{\top}\cc(\xemi-\xeinf)\}\big|\nonumber\\[5pt]
& \leq\mathbb{E}\Bigg\{\Bigg|[\boldsymbol{U}^{\top}\dx]^{\top}    
\mathrm{diag}\Bigg(\Bigg\{ \frac{\siy}{\big(1+2\mu\siy\big)^2}\Bigg\}_{i\in{\mathbb{N}^+_r}},\boldsymbol{0}\Bigg)
[\boldsymbol{U}^{\top}\dx]\Bigg|\Bigg\}\nonumber\\[5pt]
&\leq\mathbb{E}\Bigg\{ \max_{1 \leq i \leq r}\Bigg\{\frac{\siy}{(1+2\mu\siy)^2} \Bigg\} ||\dx||^2_2 \Bigg\}\nonumber\\[5pt]
&\leq\mathbb{E}\Bigg\{ \max_{1 \leq i \leq r}\Bigg\{\frac{\siy}{(2\mu\siy)^2} \Bigg\} ||\dx||^2_2 \Bigg\}\nonumber\\[5pt]
&\leq  \frac{1}{4\mu^2}\mathbb{E}\Bigg\{\frac{||\dx||^2_2}{\sminy}\Bigg\}
\end{align}
\hfill $\square$

\bibliographystyle{IEEEbib-abbrev}
\bibliography{refs.bib}
\end{document}